\title{A design criterion for symmetric model discrimination \\ based on nominal confidence sets}
\author{Radoslav Harman$^{\ast,\dagger}$\footnote{Department of Applied Statistics, Johannes Kepler University of Linz, Austria}\footnote{Department of Applied Mathematics and Statistics, Faculty of Mathematics, Physics and Informatics, Comenius University in Bratislava,  Slovakia}  and Werner G. M\"{u}ller$^\ast$}
\date{\today}

\documentclass[12pt]{article}

\usepackage{natbib,amsmath,amssymb,amsfonts,hyperref,graphicx,verbatim}
\usepackage{latexsym,amsmath,amsfonts,amssymb,amsthm,graphicx,epsfig}
\usepackage[lined,boxed,linesnumbered]{algorithm2e}
\SetKwRepeat{Do}{do}{while}
\bibliographystyle{plainnat}


\def\R{\mathbb R}

\def\F{\mathbf F}

\def\a{\mathbf a}
\def\0{\mathbf 0}
\def\X{\mathfrak X}
\def\D{\mathcal D}

\newtheorem{proposition}{Proposition}

\addtolength{\hoffset}{-2cm}
\addtolength{\voffset}{-2cm}
\addtolength{\textwidth}{4cm}
\addtolength{\textheight}{4cm}


\begin{document}
\maketitle

\begin{abstract}
Experimental design applications for discriminating between models have been hampered by the assumption to know beforehand which model is the true one, which is counter to the very aim of the experiment. Previous approaches to alleviate this requirement were either symmetrizations of asymmetric techniques, or Bayesian, minimax and sequential approaches. Here we present a genuinely symmetric criterion based on a linearized distance between mean-value surfaces and the newly introduced tool of flexible nominal confidence sets. We demonstrate the computational efficiency of the approach using the proposed criterion and provide a Monte-Carlo evaluation of its discrimination performance on the basis of the likelihood ratio. An application for a pair of competing models in enzyme kinetics is given. 
\bigskip

\textbf{Keywords:} Nonlinear regression, Discrimination experiments, Exact designs, Nominal confidence sets
\end{abstract}

\section{Introduction}

Besides optimization and parameter estimation, discrimination between rival models has always been an important objective of an experiment, and, therefore, of the optimization of experimental design. The crucial problem is that one typically cannot construct an optimal model-discrimination design without already knowing which model is the true one, and what are the true values of its parameters. In this respect, the situation is analogous to the problem of optimal experimental design for parameter estimation in non-linear statistical models (e.g. 
\cite{pazman+p_14}), and many standard techniques can be used to tackle the dependence on the unknown characteristics: localization, Bayesian, minimax, and sequential approaches, as well as their various combinations.

A big leap from initial ad-hoc methods (see \cite{hill_78}  for a review), was \cite{atkinson+f_75}, who introduced $T$-optimality derived from the likelihood-ratio test under the assumption that one model is true and its parameters are fixed at nominal values chosen by the experimenter. There, maximization of the noncentrality parameter is equivalent to maximizing the power of the likelihood-ratio test for the least favourable parameter of the model, which is assumed to be wrong. Thus, $T$-optimality can be considered a combination of a localization and a minimax approach. 

When the models are nested and (partly) linear, $T$-optimality can be shown to be equivalent to $D_s$-optimality for the parameters that embody the deviations from the smaller model (see e.g. \cite{stigler_71} and \cite{dette+t_09}). For this setting the optimal design questions are essentially solved and everything hinges on the asymmetric nature of the NP-lemma. However, for a non-nested case the design problem itself is often inherently symmetric and it is the very purpose of the experiment to decide which of the two different models is true.
\bigskip

The aim of this paper is to solve the discrimination design problem in a symmetric way focussing on non-nested models. 
Thus, standard methods that are inherently asymmetric like $T$-optimality, albeit being feasible, are not a natural choice. We further suppose that we do not {use} the full prior distribution of the unknown parameters of the models, which rules out Bayesian approaches such as \cite{felsenstein_92} and \cite{tommasi+l_10}. Nevertheless, as we will make more precise in the next section, we will {utilize what can be perceived as} a specific kind of prior knowledge about the unknown parameters, extending the approach of localization. {Our goal is to provide a lean, computationally efficient and scalable method as opposed to the heavy machinery recently employed in the computational statistics literature, eg. \cite{hainy+al_18}.  Furthermore, we strive for practical simplicity, which at first }prohibits sequential (see \cite{buzzi+f_83}, \cite{mueller+p_96} and \cite{schwaab+al_06}) or sequentially generated (see \cite{vajjah+d_12}) designs.
\bigskip

A standard solution to the symmetric discrimination design problem is to employ symmetrizations of asymmetric criteria such as compound $T$-optimality, which usually depend on some weighting chosen by the experimenter. Also the minimax strategy recently presented in \cite{tommasi+al_16} is essentially a symmetrization. Moreover, usual minimax approaches lead to designs that completely depend upon the possibly unrealistic extreme values of the parameter space and their calculation again demands enormous computational effort. 
 
As the closest in spirit to our approach could be considered a proposal for linear models in Section 4.4 of \cite{atkinson+f_75} and its extension in \cite{fedorov+k_86} which, however, was not taken up by the literature. The probable reason is that it involves some rather arbitrary restrictions on the parameters as well as taking an artificial lower bound to convert it into a computationally feasible optimization problem. 
\bigskip

For expositional purposes we will now constrict ourselves to a rather specific design task but will discuss possible extensions at the end of the paper.

Let $\X \neq \emptyset$ be a finite design space and let $\D$ be a design on $\X$, i.e., a vector of design points $x_1,\ldots,x_n \in \X$, where $n$ is the chosen size of the experiment
Hence, in the terminology of the theory of optimal experimental design, we will work with \emph{exact} designs.
We will consider discrimination between a pair of non-linear regression models
\begin{eqnarray*}
  y_i&=&\eta_0(\theta_0,x_i)+\varepsilon_i, \:\: i=1,\ldots,n, \text{ and}\\
  y_i&=&\eta_1(\theta_1,x_i)+\varepsilon_i, \:\: i=1,\ldots,n,
\end{eqnarray*}
where $y_1,\ldots,y_n$ are observations, $\eta_0: \Theta_0 \times \X \to \R$, $\eta_1: \Theta_1 \times \X \to \R$ are the mean value functions, $\Theta_0  \subseteq \R^{m_0}$, $\Theta_1 \subseteq \R^{m_1}$ are parameter spaces with non-empty interiors $\mathrm{int}(\Theta_0)$, $\mathrm{int}(\Theta_1)$, and $\varepsilon_1, \ldots, \varepsilon_n$ are unobservable random errors. For both $k=0,1$ and any $x \in \X$, we will assume that the functions $\eta_k(\cdot,x)$ are differentiable on $\mathrm{int}(\Theta_k)$; the gradient of $\eta_k(\cdot,x)$ in $\theta_k \in \mathrm{int}(\Theta_k)$ will be denoted by $\nabla \eta_k(\theta_k,x)$. Our principal assumption is that one of the models is true {but we don't know which}, i.e., for $k=0$ or for $k=1$ there exists $\bar{\theta}_k \in \Theta_k$ such that $y_i=\eta_k(\bar{\theta}_k, x_i)+\epsilon_i$. 

Let the random errors be i.i.d. $N(0,\sigma^2)$, where $\sigma^2 \in (0,\infty)$. The assumption of the same variances of the errors for both models is plausible if, for instance, the errors are due to the measurement device and hence do not significantly depend on the value being measured. The situation with different error variances requires a more elaborate approach, compare with \cite{fedorov+p_68}.
\bigskip

Eventually we are aiming not just at achieving some high design efficiencies with respect to our newly proposed criterion, but want to test its usefulness in concrete discrimination experiments, that is, the probability that using our design we arrive at the correct decision about which model is the true one. So, to justify our approach numerically, we require a model discrimination rule that will be used after all observations based on the design $\D$ are collected.

The choice of the best discrimination rule based on the observations is generally a non-trivial problem. However, it is natural to compute the maximum likelihood estimates $\hat{\theta}_0$ and $\hat{\theta}_1$ of the parameters
under the assumption of the first and the second model, respectively, and then base the decision on whether 
\begin{equation}\label{eqn:rat}
  \frac{L(\hat \theta_0|(y_i)_{i=1}^n)}{L(\hat \theta_1|(y_i)_{i=1}^n)} <> 1,
\end{equation}  
i.e., the likelihood ratio being smaller or greater than $1$, or perhaps more simply whether $\log L(\hat{\theta}_0) - \log L(\hat{\theta}_1)<>0$.  Under the normality, homoskedasticity, and independence assumptions, this decision is equivalent to a decision based on the proximity of the vector $(y_i)_{i=1}^n$ of observations to the vectors of estimated mean values $(\eta_0(\hat{\theta}_0,x_i))_{i=1}^n$ and $(\eta_1(\hat{\theta}_1,x_i))_{i=1}^n$.

For the case $m_0 \neq m_1$ to counterbalance favouring models with greater number of parameters \cite{cox_13} recommends instead the use of $L(\hat{\theta}_0)/L(\hat{\theta}_1) (e^{m_1} / e^{m_0})^{n/\tilde n}$, which corresponds to the Bayesian information criterion (BIC), see \cite{schwarz_78}. Here $\tilde{n}$ corresponds to the number of observations in a real or fictitious prior experiment. 
For the sake of simplicity however, we will restrict ourselves to the case of $m:=m_0=m_1$.
Note that for the evaluational purposes we are taking a purely model selection based standpoint. More sophisticated testing procedures {for instance allowing both models to be rejected} based on the pioneering work of \cite{cox_61} are reviewed and outlined in \cite{pesaran+w_07}.
\bigskip

Let $x_1,\ldots,x_n \in \X$ and let $\D=(x_1,\ldots,x_n)$ be the design used for the collection of data prior to the decision, and assume that model $\eta_0$ is true, with the corresponding parameter value $\bar{\theta}_0$. Note that this comes without loss of generality and symmetry as we can equivalently assume model $\eta_1$ to be true. Then, the probability of the correct decision based on the likelihood ratio is equal to
\begin{equation}\label{trueprob}
  P\left[\min_{\theta_0 \in \Theta_0} \sum_{i=1}^n (\eta_0(\theta_0,x_i)-y_i))^2 \leq \min_{\theta_1 \in \Theta_1} \sum_{i=1}^n (\eta_1(\theta_1,x_i)-y_i))^2\right],
\end{equation}
where $(y_i)_{i=1}^n$ follows the normal distribution with mean $(\eta_0(\bar{\theta}_0,x_i))_{i=1}^n$ and covariance $\sigma^2 I_n$.

Clearly, probability \eqref{trueprob} depends on the true model, the unknown true parameter, and also on the unknown variance of errors. Even if these parameters were known, the probability of the correct classification would be very difficult to compute for a given design, because this requires a combination of high-dimensional integration and non-convex optimization. Therefore, it is practically impossible to directly optimize the design based on formula \eqref{trueprob}. However, we can simplify the problem by constructing a lower bound on \eqref{trueprob} which does not depend on unknown parameters and is relatively much simpler to maximize with respect to the choice of the design. The bound based on the distance $d(E_0,E_1)$, where $E_j$ is the set of all possible mean values of the observations under the model $j$, $j=0,1$, and $d$ denotes the infimum distance, is developed as follows. 

Consider a fixed experimental design $(x_1,\ldots,x_n)$, and denote $y:=(y_i)_{i=1}^n$, $\eta_j(\theta_j):=(\eta_j(\theta_j,x_i))_{i=1}^n$ for $j=0,1$. Note that we can express \eqref{trueprob} as $P[d(E_0, y) \leq d(E_1,y)]$. Now, let $R=\|\epsilon\|$, where $\epsilon=y-\eta_0(\bar{\theta}_0)$, be the norm of the vector of errors. Assuming $R \leq d(E_0,E_1)/2$ we obtain 
\begin{eqnarray*}
  d(E_0,E_1) \leq d(\eta_0(\hat{\theta}_0),\eta_1(\hat{\theta}_1)) \leq d(y,\eta_0(\hat{\theta}_0))+d(y,\eta_1(\hat{\theta}_1)) \leq \\
  d(y,\eta_0(\bar{\theta}_0))+d(y,\eta_1(\hat{\theta}_1)) = R+d(y,\eta_1(\hat{\theta}_1)) \leq d(E_0,E_1)/2 + d(y,\eta_1(\hat{\theta}_1)),
\end{eqnarray*}
which implies $d(E_0,E_1)/2 \leq d(y,\eta_1(\hat{\theta}_1))$ and consequently
\begin{equation*}
  d(E_0,y)=d(y,\eta_0(\hat{\theta}_0)) \leq d(y,\eta_0(\bar{\theta}_0))=R\leq d(E_0,E_1)/2 \leq d(y,\eta_1(\hat{\theta}_1))=d(E_1,y)
\end{equation*}
Thus, the event $[R \leq d(E_0,E_1)/2]$ implies the event $[d(E_0, y) \leq d(E_1,y)]$, that is, \eqref{trueprob} can be bounded from below by
\begin{equation}\label{lowerbound}
  P\left[R \leq d(E_0,E_1)/2\right].
\end{equation}
To make \eqref{trueprob} as high as possible, it makes sense to maximize \eqref{lowerbound}, i.e., maximize $d(E_0,E_1)$, which depends on the underlying experimental design. 
{
While this maximization is much simpler than maximizing \eqref{trueprob} directly, it still generally requires non-convex multidimensional optimization at each iteration of the maximization procedure, which is impractical for computing exact optimal designs.  A realistic approach must be numerically feasible and circumvent the problems of the dependence of the design on unknown true model parameters, which we will achieve by rapidly computable approximation of $d(E_0,E_1)$ through linearization, as will be  explained in the following section.
}

\bigskip

\subsection*{A motivating example}
Let $\eta_0(\theta_0,x) = \theta_0 x$ and $\eta_1(\theta_1,x) = e^{\theta_1 x}$. Furthermore for the moment we assume just two observations $y_1, y_2$ at fixed design points $x_1=-1$ and $x_2=1$, respectively. In this case evidently $\hat{\theta}_0 = \frac{y_2-y_1}{2}$ and $\hat{\theta}_1$ is the solution of $2 e^{-\theta } \left(y_1-e^{-\theta }\right)-2 e^{\theta } \left(y_2-e^{\theta }\right)=0$, which for $-2 \le y_1 \le 2$ is the root of the polynomial $\theta^4-\theta^3 y_2 + \theta y_1 -1$. Figure \ref{fig1} displays the loglikelihoodratio contours for the original and linearized models and it is obvious that the former are non-convex and complex while the latter are much simpler, convex, and do approximate fairly well. Note that whilst this example is for a fixed design it motivates why the linearizations can serve as the cornerstones of our design method as will become clearer in the following sections.  

\begin{figure}[h] 
\includegraphics[width=0.5\textwidth]{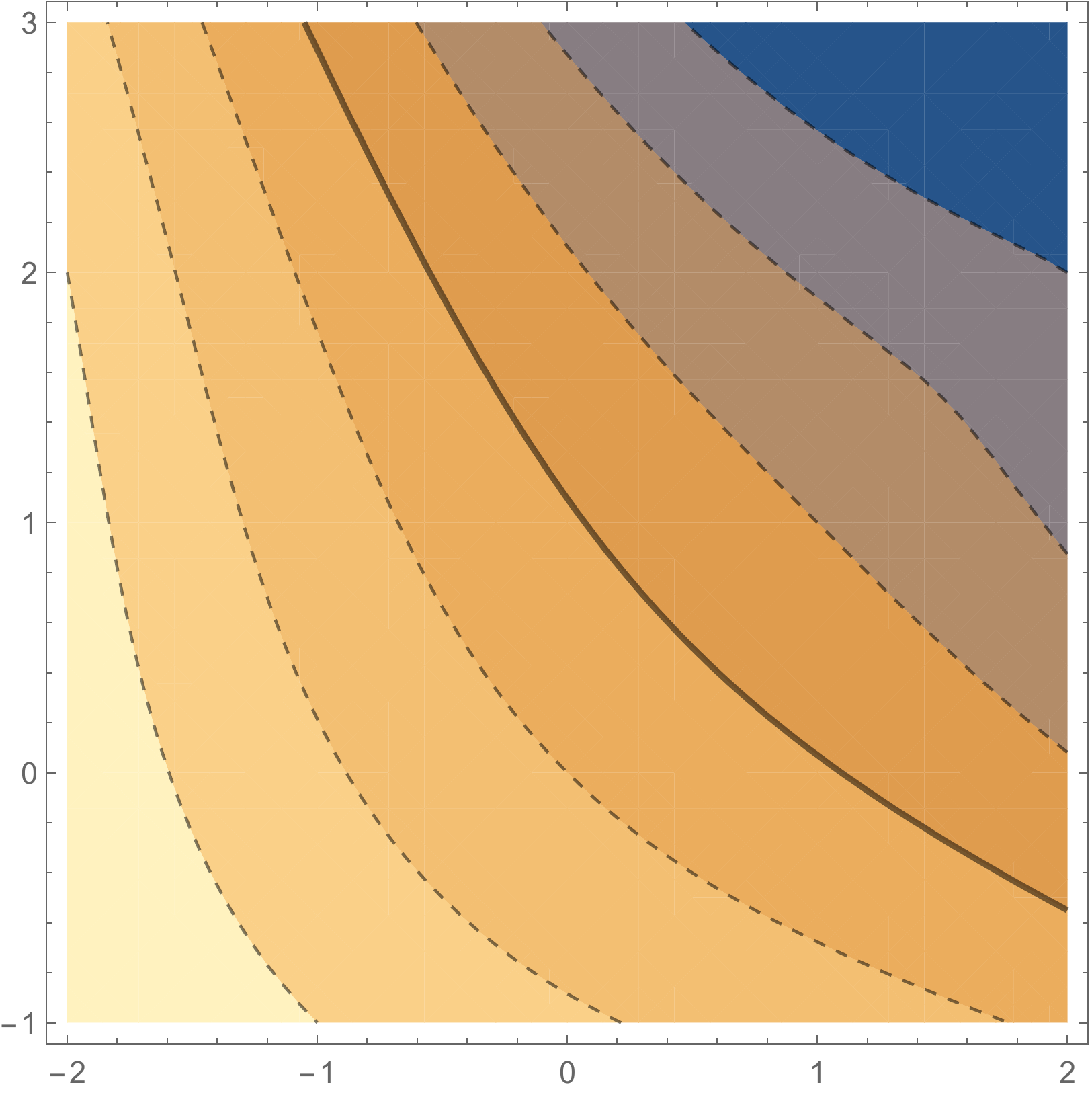}
\includegraphics[width=0.5\textwidth]{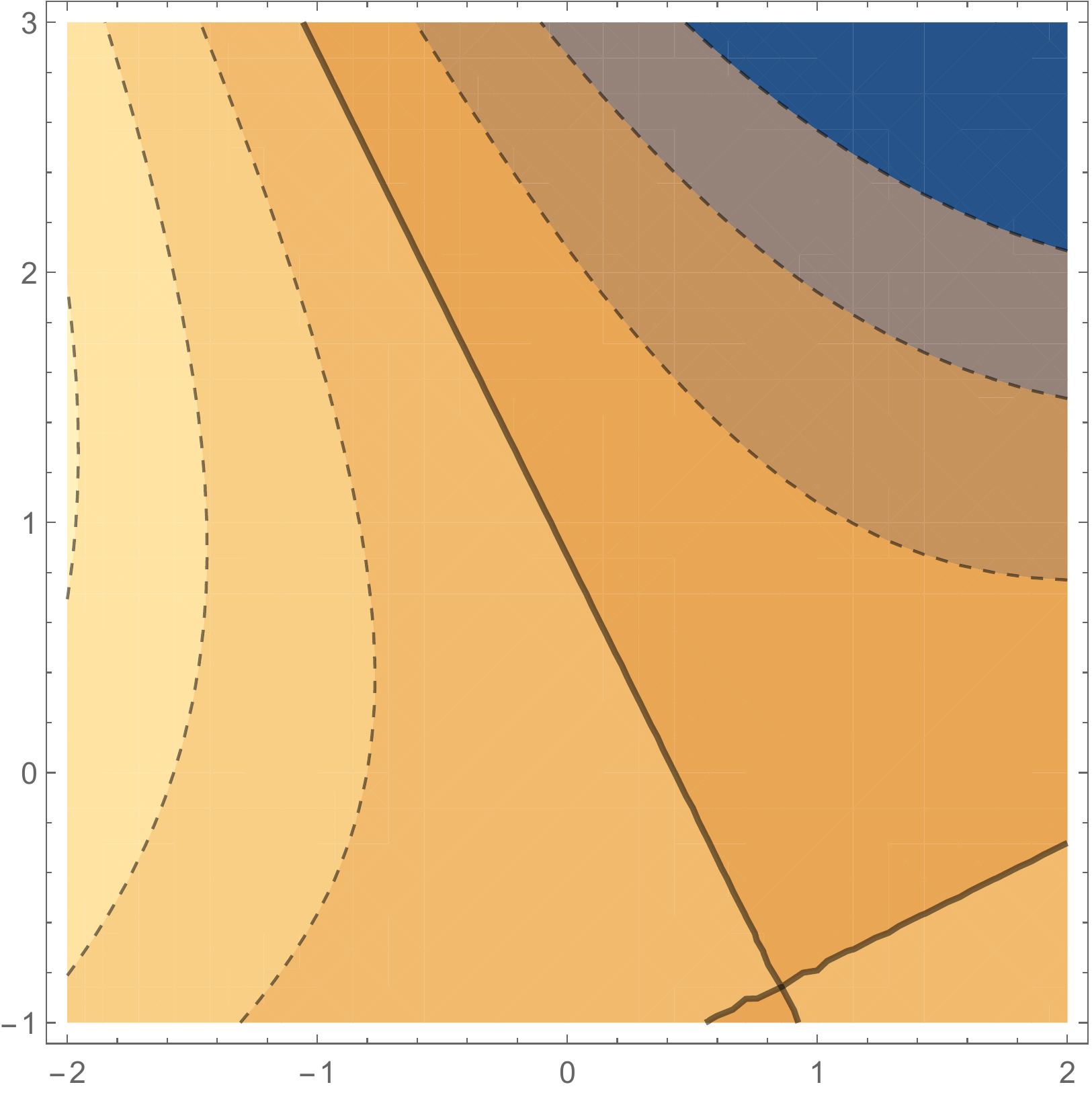}
\caption{left panel: contour plot of $\log L(\hat{\theta}_0) - \log L(\hat{\theta}_1)$ for Example 1, solid line corresponds to $0$; right panel:  corresponding contour plot for the model $\eta_1$ linearized at $\theta_1=1$.}
\label{fig1}
\end{figure}

\section{The linearized distance criterion}

We suggest an extension of the idea of localization used for the non-linear experimental design. Let $\tilde\theta_0 \in \mathrm{int}(\Theta_0)$ and $\tilde\theta_1 \in \mathrm{int}(\Theta_1)$ be nominal parameter values, which satisfy the basic \emph{discriminability condition} $\eta_0(\tilde\theta_0,x) \neq \eta_1(\tilde\theta_1,x)$ for some $x \in \X$. Let us introduce regions $\tilde\Theta_0 \subseteq \mathrm{int}(\Theta_0) \subseteq \mathbb{R}^m$ and $\tilde\Theta_1 \subseteq \mathrm{int}(\Theta_1) \subseteq \mathbb{R}^m$ containing $\tilde\theta_0$ and $\tilde\theta_1$; we will consequently call $\tilde\Theta_0$ and $\tilde\Theta_1$   {\em nominal confidence sets}. {It is evident that optimal designs depend upon the parameter spaces in the same way as on our nominal confidence sets (cf. \cite{dette+al_13}), but the latter will not be considered fixed like the parameter spaces $\Theta_0$ and $\Theta_1$, and can thus be used as a tuning device for our procedure, which has not been done before.}
\bigskip

Let $\D=(x_1,\ldots,x_n)$ be a design. Let us perform the following particular linearization of Model 
$\eta_{k=0,1}$ in $\tilde{\theta}_k$:
\begin{equation*}
  (y_i)_{i=1}^n \approx \F_k(\D)\theta_k+\a_k(\D)+\varepsilon,
\end{equation*}
where $\F_k(\D)$ is an $n \times m$ matrix given by
\begin{equation*}
  \F_k(\D)=\left(\nabla \eta_k(\tilde{\theta}_k, x_1), \ldots, \nabla \eta_k(\tilde{\theta}_k, x_n)\right)^T,
\end{equation*}
$\a_k(\D)$ is an $n$-dimensional vector
\begin{equation*}
  \a_k(\D)=(\eta_k(\tilde{\theta}_k,x_i))_{i=1}^n-\F_k(\D)\tilde{\theta}_k,
\end{equation*}
and $\varepsilon=(\varepsilon_1,\ldots,\varepsilon_n)^T$ is a vector of independent $N(0,\sigma^2)$ errors.

Note that for the proposed method the vector $\a_k(\D)$ plays an important role and, although it is known, we cannot subtract it from the vector of observations, as is usual when we linearize a single non-linear regression model. However, if $\eta_k$ corresponds to the standard linear model then $\a_k(\D)=\0_n$ for any $\D$.

\subsection{Definition of the $\delta$ criterion}

Let $\D$ be a design. Consider the design criterion
\begin{eqnarray}
\delta(\D)&=&\inf_{\theta_0 \in \tilde\Theta_0, \theta_1 \in \tilde\Theta_1} \delta(\D |\theta_0,\theta_1), \text{ where }\label{D1}\\
 \delta(\D |\theta_0,\theta_1) &=& \left\| \a_0(\D) + \F_0(\D)\theta_0-\{\a_1(\D) +  \F_1(\D)\theta_1\}\right\|, \label{D}
\end{eqnarray} 
for $\theta_0 \in \tilde\Theta_0, \theta_1 \in \tilde\Theta_1$. The criterion $\delta$ can be viewed as an approximation of the nearest distance of the mean-value surfaces of the models, in the neighbourhoods of the vectors $(\eta_0(\tilde{\theta}_0,x_i))_{i=1}^n$ and $(\eta_1(\tilde{\theta}_1,x_i))_{i=1}^n$; 
see the illustrative Figure \ref{ill}.

\begin{figure}[htb] 
\centering{
\includegraphics[width=\textwidth,height=0.4\textheight]{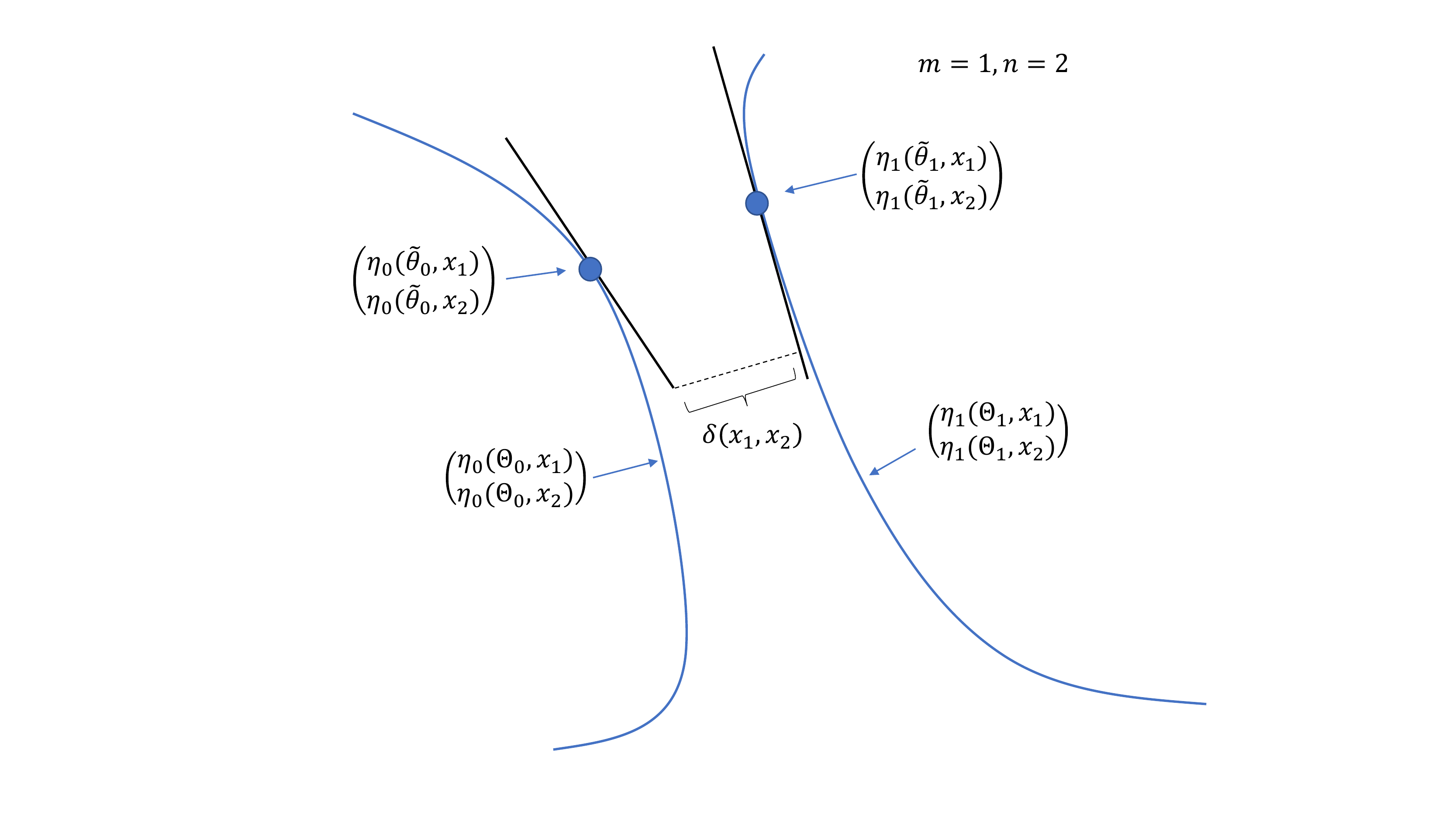}}
\vskip-0.5cm
\caption{Illustrative graph for the definition of $\delta(\D)$ for a one-parametric model ($\Theta_0, \Theta_1 \subseteq \R$) and a design of size two ($\D=(x_1,x_2)$). The line segments correspond to the sets $\{\a_0(\D)+\F_0(\D)\theta_0: \theta_0 \in \tilde{\Theta}_0\}$ and $\{\a_1(\D)+\F_1(\D)\theta_1: \theta_1 \in \tilde{\Theta}_1\}$ for some nominal confidence {intervals} $\tilde{\Theta}_0$ and $\tilde{\Theta}_1$.} 
\label{ill}
\end{figure}

We will now express the $\delta$-criterion as a function of the design $\D=(x_1,\ldots,x_n)^T$ represented by a measure $\xi$ on $\X$ defined as  
\begin{equation*}
  \xi(\{x\}):=\#\big\{i \in \{1,\ldots,n\}: x_i=x\big\}, \: x \in \X, 
\end{equation*}
where $\#$ means the size of a set. Let $\tilde{\theta}=(\tilde{\theta}_0^T, \tilde{\theta}_1^T)^T$. For all $x \in \X$ let
\begin{eqnarray*}
 \Delta \eta(\tilde{\theta}, x)&:=&\eta_0(\tilde{\theta}_0, x) - \eta_1(\tilde{\theta}_1, x), \\
 \nabla \eta(\tilde{\theta}, x)&:=&\left(\nabla \eta_0^T(\tilde{\theta}_0, x), \: -\nabla \eta_1^T(\tilde{\theta}_1, x)\right)^T.
\end{eqnarray*}
For any $\theta_0 \in \tilde\Theta_0$, $\theta_1 \in \tilde\Theta_1$ and $\theta=(\theta_0^T, \theta_1^T)^T$ we have
\begin{eqnarray}
\delta^2(\D|\theta_0,\theta_1) &=& \left\|\a_0(\D) + \F_0(\D)\theta_0-\{\a_1(\D) +  \F_1(\D)\theta_1\}\right\|^2 \nonumber \\
&=& \sum_{i=1}^n \left(\nabla \eta^T(\tilde{\theta}, x_i)(\theta-\tilde{\theta}) + \Delta \eta(\tilde{\theta}, x_i)\right)^2 \nonumber \\
&=& \int_{\X} \left(\nabla \eta^T(\tilde{\theta}, x)(\theta-\tilde{\theta}) + \Delta \eta(\tilde{\theta}, x)\right)^2  \mathrm{d}\xi(x). \label{eqn:deltaInt}
\end{eqnarray}
Therefore
\begin{equation}\label{eq:xi}
\delta^2(\D|\theta_0,\theta_1) = (\theta-\tilde{\theta})^T \mathbf{M}(\xi,\tilde{\theta}) (\theta-\tilde{\theta}) + 2 \mathbf{b}^T(\xi,\tilde{\theta})(\theta-\tilde{\theta}) + c(\xi,\tilde{\theta}), 
\end{equation}
where
\begin{eqnarray}
\mathbf{M}(\xi,\tilde{\theta})&=& \int_{\X} \nabla\eta(\tilde{\theta}, x) \nabla \eta^T(\tilde{\theta}, x) \mathrm{d}\xi(x), \label{eqn:M}\\
\mathbf{b}(\xi,\tilde{\theta})&=&  \int_{\X} \Delta\eta(\tilde{\theta}, x) \nabla\eta(\tilde{\theta}, x) \mathrm{d}\xi(x), \label{eqn:b}\\
c(\xi,\tilde{\theta}) &=& \int_{\X} [\Delta\eta(\tilde{\theta}, x)]^2 \mathrm{d}\xi(x). \label{eqn:c}
\end{eqnarray}
The matrix $\mathbf{M}(\xi,\tilde{\theta})$ in equations \eqref{eq:xi} and \eqref{eqn:M} can be recognized as the information matrix for the parameter $\theta$ in the linear regression model
\begin{eqnarray}
  z_i&=&\nabla \eta^T(\tilde{\theta},x_i)\theta + \epsilon_i \nonumber \\
     &=&[\F_0(\D), -\F_1(\D)]_{i\cdot} \theta + \epsilon_i; \: i=1,\ldots,n, \label{eqn:DRM}
\end{eqnarray}
where $[\F_0(\D), -\F_1(\D)]_{i\cdot}$ is the $i$-th row of the matrix $[\F_0(\D), -\F_1(\D)]$, with parameter $\theta$ and independent, homoskedastic errors $\epsilon_1,\ldots,\epsilon_n$ with mean $0$; we will call \eqref{eqn:DRM} a \emph{response difference model}.

\subsection{Computation of the $\delta$ criterion value for a fixed design}

For a fixed design $\D$, expression \eqref{D} shows that $\delta^2(\D|\theta)$ is a quadratic function of $\theta=(\theta_0^T,\theta_1^T)^T$. Moreover, both $\delta(\D|\theta)$ and $\delta^2(\D|\theta)$ are convex, because they are compositions of an affine function of $\theta$ and convex functions $\|.\|$ and $\|.\|^2$, respectively. Clearly, if the nominal confidence sets are compact, convex and polyhedral, optimization \eqref{D1} can be efficiently performed by specialized solvers for linearly constrained quadratic programming.
\bigskip

Alternatively, we can view the computation of $\delta(\D|\theta)$ as follows. Since
\begin{equation*}
 \delta^2(\D|\theta_0,\theta_1) = \left\|\{\a_0(\D)-\a_1(\D)\} -[-\F_0(\D), \F_1(\D)] \theta\right\|^2,
\end{equation*}
the minimization in \eqref{D1} is equivalent to computing the minimum sum of squares for a least squares estimate of $\theta$ restricted to $\tilde{\Theta}:=\tilde{\Theta}_0 \times \tilde{\Theta}_1$ in the response difference model with artificial observations 
\begin{equation*}
\tilde{z}_i=\{\a_0(\D)-\a_1(\D)\}_i, \: i=1,\ldots,n.
\end{equation*}

Thus, if $\tilde{\Theta}_0=\tilde{\Theta}_1=\R^m$, the infimum in \eqref{D1} is attained, and it can be computed using the standard formulas of linear regression in the response difference model. If the nominal confidence sets are compact cuboids, \eqref{D1} can be evaluated by the very rapid and stable method for bounded variables least squares implemented in the \texttt{R} package \texttt{bvls}; see \cite{stark+p_95} and \cite{mullen_13}.
\bigskip

The following simple proposition collects the analytic properties of a natural analogue of $\delta$ defined on the linear vector space $\Xi$ of all finite signed measures on $\X$.  

\begin{proposition}\label{approx}
For $\theta_0 \in \tilde\Theta_0$, $\theta_1 \in \tilde\Theta_1$ and a  finite signed measure $\xi$ on $\X$ let $\delta^2_{app}(\xi|\theta_0,\theta_1)$ be defined via formula \eqref{eqn:deltaInt}. Then, $\delta^2_{app}(\cdot|\theta_0,\theta_1)$ is linear on $\Xi$. Moreover, let
\begin{equation*}
  \delta^2_{app}(\xi):=\inf_{\theta_0 \in \tilde\Theta_0, \theta_1 \in \tilde\Theta_1} \delta^2_{app}(\xi |\theta_0,\theta_1).
\end{equation*}
Then, $\delta^2_{app}$ is positive homogeneous and concave on $\Xi$.
\end{proposition}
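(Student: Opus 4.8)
The plan is to establish the three claimed properties in order of increasing difficulty: first linearity of $\delta^2_{app}(\cdot\,|\theta_0,\theta_1)$ in $\xi$, then positive homogeneity of $\delta^2_{app}$, and finally concavity of $\delta^2_{app}$.

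First I would verify linearity directly from the integral representation \eqref{eqn:deltaInt}. For fixed $\theta_0,\theta_1$, the integrand $g(x):=\left(\nabla\eta^T(\tilde\theta,x)(\theta-\tilde\theta)+\Delta\eta(\tilde\theta,x)\right)^2$ does not depend on $\xi$, so $\delta^2_{app}(\xi\,|\theta_0,\theta_1)=\int_{\X} g(x)\,\mathrm{d}\xi(x)$ is simply the integral of a fixed function against the signed measure $\xi$. Integration against a finite signed measure is linear in the measure, i.e. $\int g\,\mathrm{d}(\alpha\xi_1+\beta\xi_2)=\alpha\int g\,\mathrm{d}\xi_1+\beta\int g\,\mathrm{d}\xi_2$ for scalars $\alpha,\beta$ and signed measures $\xi_1,\xi_2\in\Xi$; since $\X$ is finite this is a finite sum and the identity is immediate. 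This gives the first assertion.

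Next, positive homogeneity of $\delta^2_{app}$ follows from linearity of each fixed-parameter version together with the behaviour of the infimum under scaling by a positive constant. For $t>0$ I would write $\delta^2_{app}(t\xi)=\inf_{\theta_0,\theta_1}\delta^2_{app}(t\xi\,|\theta_0,\theta_1)=\inf_{\theta_0,\theta_1}t\,\delta^2_{app}(\xi\,|\theta_0,\theta_1)=t\inf_{\theta_0,\theta_1}\delta^2_{app}(\xi\,|\theta_0,\theta_1)=t\,\delta^2_{app}(\xi)$, using that a positive factor pulls out of an infimum. For concavity, I would use the standard fact that a pointwise infimum of a family of linear (hence concave) functions is concave: since $\xi\mapsto\delta^2_{app}(\xi\,|\theta_0,\theta_1)$ is linear on $\Xi$ for each $(\theta_0,\theta_1)\in\tilde\Theta_0\times\tilde\Theta_1$, the infimum $\delta^2_{app}(\xi)=\inf_{\theta_0,\theta_1}\delta^2_{app}(\xi\,|\theta_0,\theta_1)$ over this fixed index set is concave in $\xi$. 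Concretely, for $\xi_1,\xi_2\in\Xi$ and $\lambda\in[0,1]$, linearity gives $\delta^2_{app}(\lambda\xi_1+(1-\lambda)\xi_2\,|\theta_0,\theta_1)=\lambda\,\delta^2_{app}(\xi_1\,|\theta_0,\theta_1)+(1-\lambda)\,\delta^2_{app}(\xi_2\,|\theta_0,\theta_1)\geq\lambda\,\delta^2_{app}(\xi_1)+(1-\lambda)\,\delta^2_{app}(\xi_2)$, and taking the infimum over $(\theta_0,\theta_1)$ on the left yields concavity.

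There is no serious obstacle here; the proposition is essentially a formal consequence of the representation \eqref{eqn:deltaInt}. The only point requiring mild care is that $\delta^2_{app}$ is defined on the full space $\Xi$ of signed measures rather than on designs, so I would note that the integrand $g$ may be negative-valued-weighted once $\xi$ is allowed to be signed; nevertheless the infimum over the fixed, $\xi$-independent index set $\tilde\Theta_0\times\tilde\Theta_1$ may equal $-\infty$ for some signed $\xi$, and I would remark that concavity and positive homogeneity are understood in the extended-real-valued sense, so the inequalities above remain valid. Establishing that the index family is the same for all $\xi$ is what makes the infimum-of-linear-functions argument go through, and that is the crux to emphasize.
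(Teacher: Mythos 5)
Your proof is correct: the paper actually states this proposition without any proof (introducing it as a ``simple proposition''), and your argument --- linearity of integration against a signed measure on the finite set $\X$, positive homogeneity by factoring $t>0$ out of the infimum, and concavity as a pointwise infimum of linear functions over the $\xi$-independent index set $\tilde\Theta_0\times\tilde\Theta_1$ --- is exactly the standard reasoning the paper implicitly relies on. Your closing remark that on genuinely signed measures the infimum may be $-\infty$, so the properties should be read in the extended-real-valued sense, is a careful point the paper glosses over entirely.
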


Positive homogeneity of $\delta^2_{app}$ implies that an $s$-fold replication of an exact design leads to an $s$-fold increase of its $\delta^2$ value. Consequently, a natural and statistically interpretable definition of relative $\delta$-efficiency of two designs $\D_1$ and $\D_2$ is given by $\delta^2(\D_1)/\delta^2(\D_2)$, provided that $\delta^2(\D_2)>0$.
\bigskip 

Let $\mathfrak{D}$ be the set of all $n$-point designs. A design $\D^* \in \mathfrak{D}$ will be called $\delta$-optimal, if 
\begin{equation*}
  \D^* \in \mathrm{argmax}_{\D \in \mathfrak{D}} \delta(\D).
\end{equation*}
Note that the basic discriminability condition implies that if $\tilde\Theta_0=\{\tilde\theta_0\}$ and $\tilde\Theta_1=\{\tilde\theta_1\}$, then $\delta(\D^*)$ is strictly positive. However, for larger nominal confidence sets it can happen that $\delta(\D^*)=0$.

As the evaluation of the $\delta$-criterion is generally very rapid, a $\delta$-optimal design, or a nearly $\delta$-optimal design can be computed similarly as for the standard design criteria. For instance, in small problems we can use complete-enumeration and in larger problems we can employ an exchange heuristic, such as the KL exchange algorithm (see e.g. \cite{atkinson+al_07}).
\bigskip

Note that the $\delta$-optimal designs depend not only on $\eta_0$, $\eta_1$, $\X$, $n$, $\tilde{\theta}_0$ and $\tilde{\theta}_1$, but also on the nominal confidence sets $\tilde{\Theta}_0$ and $\tilde{\Theta}_1$. 


\subsection{Parametrization of nominal confidence sets}
\bigskip

For simplicity, we will focus on cuboid nominal confidence sets centered at the nominal parameter values. This choice can be justified by the results of \cite{sidak_67}, in particular if we already have confidence intervals for individual parameters{, see further discussion in Section \ref{Sec:outlook}}. Specifically, we will employ the homogeneous dilations
\begin{equation}\label{rsystem}
\tilde\Theta_{k}^{(r)} := r\left(\tilde\Theta_{k}^{(1)} - \tilde\theta_{k}\right) + \tilde\theta_{k}, \qquad r \in [0,\infty), \: k=0,1, 
\end{equation}
$\tilde\Theta_{0}^{(\infty)}:=\mathbb{R}^m$, $\tilde\Theta_{1}^{(\infty)}:=\mathbb{R}^m$, such that $r$ can be considered a tuning parameter governing the size of the nominal confidence sets. In \eqref{rsystem}, $\tilde\Theta_{0}^{(1)}$ and $\tilde\Theta_{1}^{(1)}$ are ``unit'' non-degenerate compact cuboid confidence sets centred in respective nominal parameters. For any design $\D$ and $r \in [0,\infty]$, we define
\begin{equation}\label{delr}
 \delta_r(\D):=\inf_{\theta_0 \in \tilde\Theta_{0}^{(r)}, \theta_1 \in \tilde\Theta_{1}^{(r)}} \delta(\D |\theta_0,\theta_1).
\end{equation}
Note that for our choice of nominal confidence sets the infimum in \eqref{delr} is attained. The $\delta_r$-optimal values of the problem will be denoted by
\begin{equation*}
 o(r):=\max_{\D \in \mathfrak{D}} \delta_r(\D).
\end{equation*}

\begin{proposition}\label{funr}
a) Let $\D$ be a design. Functions $\delta^2_r(\D)$, $\delta_r(\D)$, $o^2(r)$, $o(r)$ are non-increasing and convex in $r$ on the entire interval $[0,\infty]$. b) There exists $r^* < \infty$, such that for all $r \geq r^*$: (i) $o(r) = o(\infty)$; (ii) Any $\delta_{\infty}$-optimal design is also a $\delta_r$-optimal design.
\end{proposition}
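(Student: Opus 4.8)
The plan is to reduce $\delta_r(\D)$ to a one-parameter family of distances to a \emph{fixed} convex body, which makes both the monotonicity and---crucially---the convexity in $r$ transparent. Write $U_k:=\tilde\Theta_k^{(1)}-\tilde\theta_k$ for $k=0,1$; by the assumptions on the unit sets, each $U_k$ is a compact, convex, origin-symmetric cuboid with $\0_m$ in its interior. The dilation \eqref{rsystem} then reads $\tilde\Theta_k^{(r)}=rU_k+\tilde\theta_k$, so $\theta_k\in\tilde\Theta_k^{(r)}$ iff $\theta_k=\tilde\theta_k+ru_k$ with $u_k\in U_k$. Substituting into \eqref{D} and using $\a_k(\D)+\F_k(\D)\tilde\theta_k=(\eta_k(\tilde\theta_k,x_i))_{i=1}^n$, I would obtain
\begin{equation*}
  \delta_r(\D)=\inf_{w\in W(\D)}\bigl\|g(\D)+rw\bigr\|,
\end{equation*}
where $g(\D):=(\Delta\eta(\tilde\theta,x_i))_{i=1}^n$ is fixed and $W(\D):=\F_0(\D)U_0-\F_1(\D)U_1$, the Minkowski sum of the linear images $\F_0(\D)U_0$ and $-\F_1(\D)U_1$, is a compact, convex, origin-symmetric set that \emph{does not depend on $r$}. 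The whole argument hinges on this last fact.

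For part (a), monotonicity is immediate: since $W(\D)$ is convex and contains $\0_n$, one has $r_1W(\D)\subseteq r_2W(\D)$ whenever $r_1\le r_2$, so the infimum of $\|g(\D)+\,\cdot\,\|$ over the larger set $r_2W(\D)$ cannot exceed that over $r_1W(\D)$; hence $\delta_r(\D)$ is non-increasing. For convexity, fix $r_1,r_2\ge0$, $\lambda\in[0,1]$, set $r=\lambda r_1+(1-\lambda)r_2$, and let $w_1,w_2\in W(\D)$ attain the two infima (they exist by compactness). The key step is to take $w:=(\lambda r_1w_1+(1-\lambda)r_2w_2)/r\in W(\D)$, a valid choice because the coefficients $\lambda r_1/r$ and $(1-\lambda)r_2/r$ are non-negative and sum to one, so $w$ is a convex combination of points of $W(\D)$. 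Then $g(\D)+rw=\lambda(g(\D)+r_1w_1)+(1-\lambda)(g(\D)+r_2w_2)$, and the triangle inequality yields $\delta_r(\D)\le\|g(\D)+rw\|\le\lambda\,\delta_{r_1}(\D)+(1-\lambda)\,\delta_{r_2}(\D)$. Convexity and monotonicity of $\delta_r^2(\D)$ follow because $t\mapsto t^2$ is convex and non-decreasing on $[0,\infty)$ while $\delta_r(\D)\ge0$; and since $\mathfrak D$ is \emph{finite}, $o(r)=\max_\D\delta_r(\D)$ and $o^2(r)=\max_\D\delta_r^2(\D)$ inherit both properties as pointwise maxima of finitely many non-increasing convex functions. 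Inclusion of the endpoint $r=\infty$ I would defer to part (b), which shows each function is eventually constant.

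For part (b), I would first show that for each \emph{individual} design the curve $r\mapsto\delta_r(\D)$ is constant from some finite point on. Let $V(\D):=\mathrm{span}\,W(\D)$; since $W(\D)$ is an origin-symmetric convex body in $V(\D)$, we have $\bigcup_{r\ge0}rW(\D)=V(\D)$, and $\delta_\infty(\D)=\mathrm{dist}(-g(\D),V(\D))$ is realized by the orthogonal projection $p$ of $-g(\D)$ onto $V(\D)$. Because $p\in V(\D)$, there is a finite $r_{\D}$ with $p\in r_{\D}W(\D)$, and then $p\in rW(\D)$ for every $r\ge r_{\D}$ (again using convexity and $\0_n\in W(\D)$). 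For such $r$, $rW(\D)\subseteq V(\D)$ forces $\delta_r(\D)\ge\delta_\infty(\D)$, while $p\in rW(\D)$ forces $\delta_r(\D)\le\|{-g(\D)}-p\|=\delta_\infty(\D)$; hence $\delta_r(\D)=\delta_\infty(\D)$ for all $r\ge r_{\D}$. Finiteness of $\mathfrak D$ then lets me set $r^*:=\max_{\D\in\mathfrak D}r_{\D}<\infty$, so that $\delta_r(\D)=\delta_\infty(\D)$ simultaneously for all designs once $r\ge r^*$. Claim (i) is then $o(r)=\max_\D\delta_r(\D)=\max_\D\delta_\infty(\D)=o(\infty)$, and claim (ii) follows because a $\delta_\infty$-optimal $\D^*$ satisfies $\delta_r(\D^*)=\delta_\infty(\D^*)=o(\infty)=o(r)$ for $r\ge r^*$. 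This eventual constancy also closes the gap in part (a) at $r=\infty$.

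The only genuinely delicate point is the convexity in (a): an infimum of convex functions is in general not convex, so it is essential that the reparametrization isolates an index set $W(\D)$ that is \emph{independent of $r$}, after which the explicit convex-combination choice of $w$ does the work. Everything else is bookkeeping, the load-bearing facts being the convexity, compactness and origin-symmetry of $W(\D)$ together with the finiteness of $\mathfrak D$.
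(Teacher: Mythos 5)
Your proof is correct, and it reaches the proposition by a route whose packaging differs from the paper's even though the combinatorial core is the same. The paper works entirely in parameter space: for part (a) it takes minimizers $\hat\theta_{r_1},\hat\theta_{r_2}$ of $\delta^2(\D|\cdot)$ over $\tilde\Theta^{(r_1)},\tilde\Theta^{(r_2)}$, observes that $\alpha\hat\theta_{r_1}+(1-\alpha)\hat\theta_{r_2}\in\tilde\Theta^{(r_\alpha)}$, and uses convexity of the quadratic $\delta^2(\D|\cdot)$ in $\theta$, proving the statement for $\delta^2_r(\D)$ first and treating $\delta_r(\D)$ ``analogously''; for part (b) it notes that the non-negative quadratic $\delta^2_\infty(\D|\cdot)$ attains its minimum at some $\theta_\D\in\R^{2m}$, that $\tilde\Theta^{(r)}\uparrow\R^{2m}$, and that $\mathfrak{D}$ is finite. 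Your distinctive move is the reparametrization $\theta_k=\tilde\theta_k+ru_k$, which turns $\delta_r(\D)$ into $\inf_{w\in W(\D)}\|g(\D)+rw\|$ with an $r$-independent compact convex body $W(\D)$; note, though, that your convex-combination choice of $w$ is exactly the image of the paper's $\alpha\hat\theta_{r_1}+(1-\alpha)\hat\theta_{r_2}$ under the map $\theta\mapsto\F_0(\D)(\theta_0-\tilde\theta_0)-\F_1(\D)(\theta_1-\tilde\theta_1)$, so the two convexity arguments are isomorphic, with the triangle inequality on the norm replacing convexity of the quadratic. Where your route genuinely diverges is part (b): you replace the paper's appeal to attainment of the minimum of a non-negative quadratic by the explicit orthogonal projection of $-g(\D)$ onto $V(\D)=\mathrm{span}\,W(\D)$, and both proofs then finish identically via finiteness of $\mathfrak{D}$. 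What your version buys: a clean geometric picture (distance from a fixed point to dilates of a fixed convex body), a single argument that covers $\delta_r$ and $\delta^2_r$ at once via composition with $t\mapsto t^2$, a per-design eventual-constancy statement that is slightly stronger than (i)--(ii), and an honest treatment of the endpoint $r=\infty$, which the paper glosses over. What it costs: you rely twice on non-degeneracy of the unit cuboids --- once so that $\0_n$ lies in the relative interior of $W(\D)$, giving $\bigcup_{r\ge 0}rW(\D)=V(\D)$, and once to identify $V(\D)$ with the column space of $[\F_0(\D),-\F_1(\D)]$, which is what equates $\mathrm{dist}(-g(\D),V(\D))$ with the paper's $\delta_\infty(\D)$ defined as an infimum over all of $\R^{2m}$; both facts are true here but deserve the explicit half-line you only partially supply. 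Finally, your convexity step divides by $r=\lambda r_1+(1-\lambda)r_2$; the degenerate case $r=0$ forces $r_1=r_2=0$ and is trivial, but should be flagged.
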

\begin{proof}
 a) Let $\D$ be an $n$-point design and let $0 \leq r_1 \leq r_2 \in [0,\infty]$.
 
 Inequality $\delta^2_{r_1}(\D) \geq \delta^2_{r_2}(\D)$ follows from definitions \eqref{rsystem}, \eqref{delr}, and inequality $o^2(r_1) \geq o^2(r_2)$ follows from the fact that a maximum of non-increasing functions is a non-increasing function. Monotonicity of $\delta_r(\D)$ and $o(r)$ in $r$ can be shown analogously.
 
 To prove the convexity of $\delta^2_r(\D)$ in $r$, let $\alpha \in (0,1)$ and let $r_{\alpha}=\alpha r_1 + (1-\alpha) r_2$. For all $r \in [0,\infty]$, let $\hat{\theta}_r$ denote a minimizer of $\delta^2_r(\D|\cdot)$ on $\tilde\Theta^{(r)} := \tilde\Theta^{(r)}_0 \times \tilde\Theta^{(r)}_1$. Convexity of $\delta^2(\D|\theta)$ in $\theta$ and a simple fact $\alpha \hat{\theta}_{r_1} + (1-\alpha) \hat{\theta}_{r_2} \in \tilde\Theta^{(r_{\alpha})}$ yield
\begin{eqnarray*}
  \alpha \delta^2_{r_1}(\D)+(1-\alpha)\delta^2_{r_2}(\D) = \alpha \delta^2(\D|\hat{\theta}_{r_1})+(1-\alpha)\delta^2(\D|\hat{\theta}_{r_2}) \\
  \geq \delta^2(\D|\alpha \hat{\theta}_{r_1}+(1-\alpha)\hat{\theta}_{r_2}) \geq \delta^2(\D|\hat{\theta}_{r_{\alpha}})= \delta^2_{r_{\alpha}}(\D),
\end{eqnarray*}
which proves that $\delta^2_r(\D)$ is convex in $r$. The convexity of $\delta_r(\D)$ in $r$ can be shown analogously. The functions $o^2$ and $o$, as point-wise maxima of a system of convex functions, are also convex.

b) For any design $\D$ of size $n$, the function $\delta^2_\infty(\D|\cdot)$ is non-negative and quadratic on $\mathbb{R}^{2m}$, therefore its minimum is attained in some $\theta_{\D} \in \mathbb{R}^{2m}$. There is only a finite number of exact designs of size $n$, and $\tilde\Theta^{(r)} \uparrow_r \mathbb{R}^{2m}$, which means that there exists $r^* < \infty$ such that $\theta_{\D} \in \tilde\Theta^{(r^*)}$ for all designs $\D$ of size $n$. Let $r \geq r^*$. We have
\begin{equation*}
  o(\infty)=\max_{D \in \mathfrak{D}} \min_{\theta \in \mathbb{R}^{2m}} \delta_\infty(\D|\theta)= \max_{D \in \mathfrak{D}} \min_{\theta \in \tilde\Theta^{(r)}} \delta(\D|\theta) = \max_{D \in \mathfrak{D}} \delta_{r}(\D|\theta)=o(r),
\end{equation*}
proving (i). Let $\D^{(\infty)}$ be any $\delta_\infty$-optimal $n$-trial design. The equality (i) and the fact that $\delta_r(\D^{(\infty)})$ and $o(r)$ are non-increasing with respect to $r$ gives 
\begin{equation*}
  \delta_r(\D^{(\infty)}) \geq \delta_\infty(\D^{(\infty)}) = o(\infty) = o(r^*) \geq o(r),
\end{equation*}
Which proves (ii).  
\end{proof}

The second part of Proposition \ref{funr} implies the existence of a finite interval $[0,r^*]$ of relevant confidence parameters; increasing the confidence parameter beyond $r^*$ keeps the set of optimal designs as well as the optimal value of the $\delta$-criterion unchanged. We will call any such $r^*$ an \emph{upper confidence bound}. 
\bigskip

Algorithm \ref{rmax} provides a simple iterative method of computing $r^*$. Our experience shows that it usually requires only a small number of re-computations of the $\delta_r$-optimal design, even if $r_{ini}$ is small and $q$ is close to $1$, resulting in a good upper confidence bound $r^*$ (see the meta-code of Algorithm \ref{rmax} for details). 

\begin{algorithm}
	\SetKwInOut{Input}{Input}\SetKwInOut{Output}{Output}
	\Input{Pre-computed value $o(\infty)$, an initial confidence $r_{ini}>0$, a ratio $q>1$}
	\Output{An upper confidence bound $r^*$} 
	\BlankLine 
	Set $r \leftarrow r_{ini}$ and $fin \leftarrow 0$\\
	Compute a $\delta_r$-optimal design, denote it by $\D$\\
	\If {$\delta_r(\D)=o(\infty)$} {
	   Set $fin \leftarrow 1$
	}
	\While {$fin=0$} {
	  Set $r \leftarrow q.r$\\
	  \If {$\delta_r(\D) \leq o(\infty)$} {
		Recompute a $\delta_r$-optimal design, denote it by $\D$\\
		\If {$\delta_r(\D) \leq o(\infty)$} {
		  Set $fin \leftarrow 1$
		}
      }
    }
    Set $r^* \leftarrow r$
	\caption{A simple algorithm for computing an upper confidence bound. Due to the high speed and stability of the computation of the values of $\delta_r$ for candidate designs, it is possible to use an adaptation of the standard KL exchange heuristic to compute the input value $o(\infty)$, as well as to obtain $\delta_r$-optimal designs in steps 2 and 9 of the algorithm itself.}\label{rmax} 
\end{algorithm}

\subsection*{The motivating example continued}
Consider the models from the motivating example. Let $\X=\{1.00, 1.01, \ldots, 2.00\}$, $\tilde{\theta}_0=e$, and $\tilde{\theta}_1=1$. Note that these nominal values satisfy $\eta_0(\tilde\theta_0,1)=\eta_1(\tilde\theta_1,1)$. Moreover, let us set $\tilde\Theta^{(0)}=[e-1,e+1]$ and $\tilde\Theta^{(1)}=[0,2]$, and let the required size of the experiment be $n=6$. First, we computed the value $o(\infty) \approx 0.02614$. Next, we used Algorithm \ref{rmax} with $r_{ini}=0.3$ and $q=1+10^{-6}$, which returned an upper confidence bound $r^* \approx 0.6787$ after as few as $7$ computations of $\delta_r$-optimal designs. Informed by $r^*$, we computed $\delta_r$-optimal designs for $r=0.01, 0.1, 0.2, \ldots, 0.7$. The resulting $\delta_r$-optimal designs are displayed in Figure \ref{tmec}. Note that if $\tilde\Theta^{(r)}$'s are very narrow, the $\delta_r$-optimal design is concentrated in the design point $x=2$, effectively maximizing the difference between $\eta_0(\tilde\theta_0,x)$ and $\eta_1(\tilde\theta_1,x)$. For larger values of $r$, the $\delta_r$-optimal design has a $2$-point and ultimately a $3$-point support.
\bigskip

\begin{figure}[ht] 
\vskip-1.cm
\centering
\includegraphics[width=0.7\textwidth]{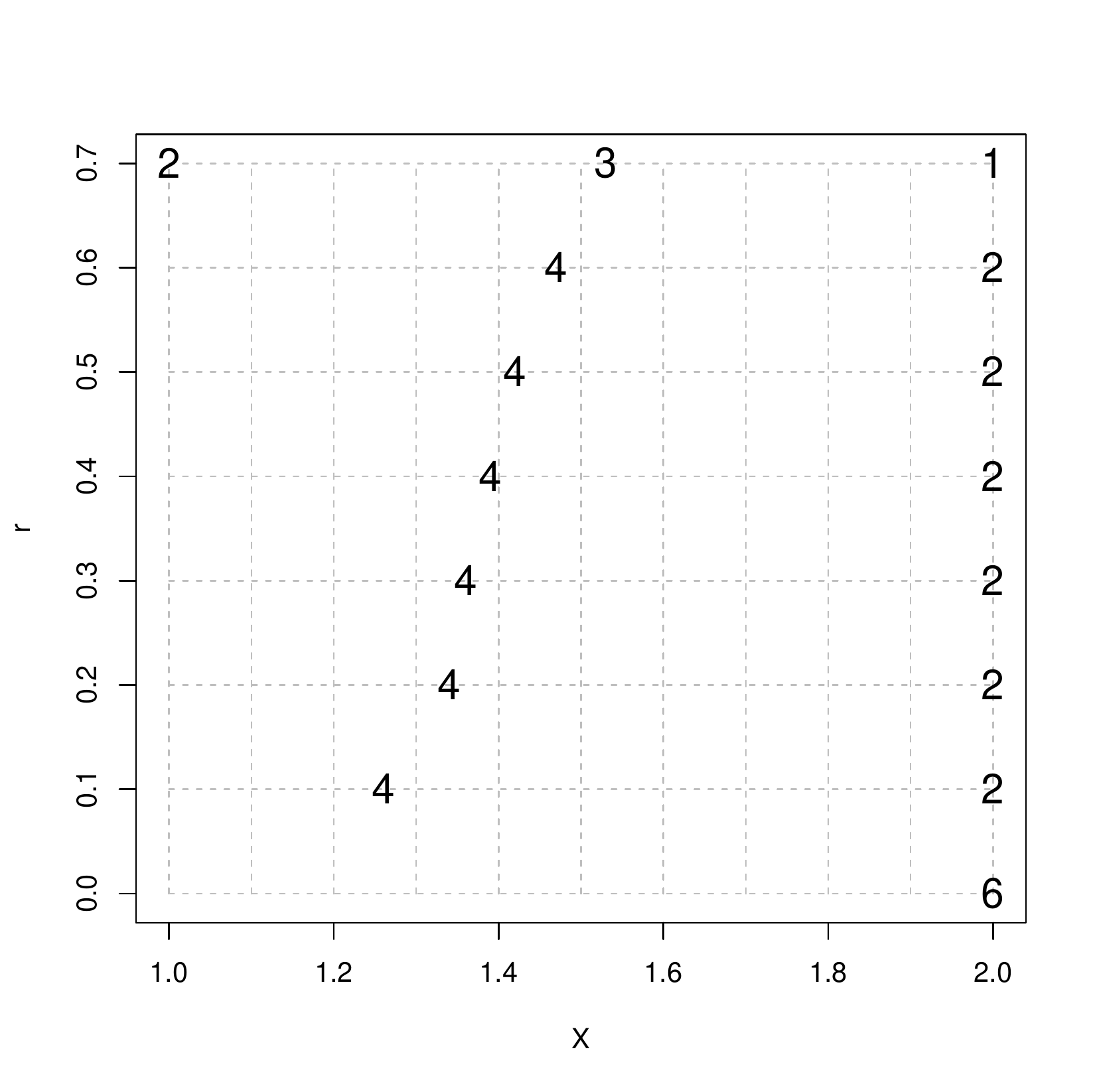}
\vskip-0.5cm
\caption{$\delta_r$-optimal designs of size $n=6$ for different $r$'s; see the second part of the motivating example. The horizontal axis corresponds to the design space, and the vertical axis corresponds to different spans $r$ of the nominal confidence sets. For each $r$, the figure displays the number of repeated observations at different design points, corresponding to the $\delta_r$-optimal design.}\label{tmec}
\end{figure}

For some pairs of competing models there exists an upper confidence bound $r^*$, beyond which the values of $\delta_r$ are constantly $0$ for all designs. These cases can be identified by solving a linear programming (LP) problem, as we show next.

\begin{proposition}\label{ECB}
Let $\bar\D$ be the design which performs exactly one trial in each point of $\X$. Consider the following LP problem with variables $r \in \mathbb{R}$, $\theta_0 \in \mathbb{R}^m$, $\theta_1 \in \mathbb{R}^m$:
\begin{eqnarray}\label{LP}
\min && r\\
\mathrm{s.t.} && \F_0(\bar\D)\theta_0+\a_0(\bar\D) = \F_1(\bar\D)\theta_1+\a_1(\bar\D), \nonumber \\
&& \theta_0 \in \tilde\Theta^{(r)}_0, \: \theta_1 \in \tilde\Theta^{(r)}_1, \: r \geq 0.\nonumber 
\end{eqnarray}
 Assume that \eqref{LP} has some solution, and denote one solution of \eqref{LP} by $(r^*,\theta_a^T,\theta_b^T)^T$. Then, $r^*$ is a finite upper confidence bound. Moreover, $o(r)=0$ for all $r \in [r^*,\infty]$. 
\end{proposition}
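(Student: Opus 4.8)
The plan is to exploit that the design $\bar\D$ queries \emph{every} point of $\X$ exactly once, so that forcing the two linearized mean-value surfaces to coincide on $\bar\D$ forces them to coincide at each individual $x\in\X$, and therefore on the support of an \emph{arbitrary} design. First I would reinterpret the equality constraint of \eqref{LP}. By the definition \eqref{D}, the constraint $\F_0(\bar\D)\theta_0+\a_0(\bar\D) = \F_1(\bar\D)\theta_1+\a_1(\bar\D)$ is nothing but the statement $\delta(\bar\D\,|\,\theta_0,\theta_1)=0$. Expressing this via the integral representation \eqref{eqn:deltaInt} with the measure of $\bar\D$, which places mass one at every point of $\X$, and writing $\theta=(\theta_0^T,\theta_1^T)^T$, it reads
\begin{equation*}
  \sum_{x\in\X}\Big(\nabla\eta^T(\tilde{\theta},x)(\theta-\tilde{\theta})+\Delta\eta(\tilde{\theta},x)\Big)^2 = 0 .
\end{equation*}
Being a sum of squares, this is equivalent to the pointwise identity $\nabla\eta^T(\tilde{\theta},x)(\theta-\tilde{\theta})+\Delta\eta(\tilde{\theta},x)=0$ holding for \emph{every} $x\in\X$.

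Second, I would observe that this pointwise identity is exactly what makes $\delta$ vanish on any design. For an arbitrary design $\D=(x_1,\dots,x_n)$ with $x_i\in\X$, the same representation \eqref{eqn:deltaInt} gives $\delta^2(\D\,|\,\theta_0,\theta_1)=\sum_{i=1}^n(\nabla\eta^T(\tilde{\theta},x_i)(\theta-\tilde{\theta})+\Delta\eta(\tilde{\theta},x_i))^2=0$, since each summand vanishes. Applying this to the LP solution $(r^*,\theta_a^T,\theta_b^T)^T$, which satisfies $\theta_a\in\tilde\Theta^{(r^*)}_0$ and $\theta_b\in\tilde\Theta^{(r^*)}_1$, I obtain from \eqref{delr} that $0\le\delta_{r^*}(\D)\le\delta(\D\,|\,\theta_a,\theta_b)=0$ for every $\D$, hence $\delta_{r^*}(\D)=0$ and $o(r^*)=\max_{\D}\delta_{r^*}(\D)=0$. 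Finiteness of $r^*$ is immediate from the standing assumption that \eqref{LP} is solvable.

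Third, I would extend the conclusion from $r^*$ to the whole interval $[r^*,\infty]$. Because the unit sets $\tilde\Theta^{(1)}_k$ are cuboids centred at $\tilde\theta_k$ (convex and containing their centres), the dilations \eqref{rsystem} are nested, $\tilde\Theta^{(r^*)}_k\subseteq\tilde\Theta^{(r)}_k$ for $r\ge r^*$, with $\tilde\Theta^{(\infty)}_k=\R^m$; this is the same monotonicity already used in Proposition~\ref{funr}. Consequently $\theta_a\in\tilde\Theta^{(r)}_0$ and $\theta_b\in\tilde\Theta^{(r)}_1$ for every such $r$, so the argument of the previous step yields $\delta_r(\D)=0$ for all designs $\D$, and thus $o(r)=0$, for all $r\in[r^*,\infty]$. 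In particular $o(\infty)=0$.

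Finally I would verify that $r^*$ satisfies the definition of an upper confidence bound. Condition~(i) holds because $o(r)=0=o(\infty)$ for every $r\ge r^*$. For condition~(ii), any $\delta_\infty$-optimal design $\D^{(\infty)}$ satisfies $\delta_\infty(\D^{(\infty)})=o(\infty)=0$; but by the third step \emph{every} design attains $\delta_r(\D)=0=o(r)$ for $r\ge r^*$, so every design, and in particular $\D^{(\infty)}$, is $\delta_r$-optimal. This establishes both assertions of the proposition. The step deserving the most care is the equivalence in the first paragraph: that agreement of the linearized surfaces on the exhaustive design $\bar\D$ is genuinely equivalent to agreement at each separate point of $\X$. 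This is the conceptual heart of the result and is precisely where the special choice of $\bar\D$ (one trial per point) is indispensable; the remaining steps are monotonicity bookkeeping.
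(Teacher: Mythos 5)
Your proof is correct and rests on exactly the same mechanism as the paper's: the LP solution supplies a pair $(\theta_a,\theta_b)$ for which the linearized response difference vanishes at every point of $\X$ (since $\bar\D$ exhausts $\X$), and this pointwise vanishing, together with the nestedness of the dilated confidence sets, forces $\delta_r(\D)=0$ for every design and every $r\ge r^*$. The only difference is presentational --- the paper packages this into monotonicity claims about non-replication versions and augmentations of designs, whereas you inline the sum-of-squares argument directly and additionally spell out the verification of conditions (i) and (ii) of an upper confidence bound, which the paper leaves implicit.
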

\begin{proof}
From the expression \eqref{eq:xi} we see that for any design $\D$ and its non-replication version $\D^{nr}$ we have: $\delta_r(\D^{nr})=0$ implies $\delta_r(\D)=0$. Moreover, if $\D_2 \succeq \D_1$ in the sense that $\D_2$ is an augmentation of $\D_1$ then: $\delta_r(\D_2)=0$ implies $\delta_r(\D_1)=0$. Now let $(r^*,\theta_a^T,\theta_b^T)^T$ be a solution of \eqref{LP}, let $r \geq r^*$ and let $\D$ be any design. Definition of $\delta_r$ and the form of \eqref{LP} imply $\delta_r(\bar\D)=0$. From $\bar{\D} \succeq \D^{nr}$ we see that then $\delta_r(\D^{nr})=0$, hence $\delta_r(\D)=0$. The proposition follows.
\end{proof}

Note that $r^*$ obtained using Proposition \ref{LP} does not depend on $n$, i.e., it is an upper confidence bound simultaneously valid for all design sizes. The basic discriminability condition implies that $r^* \neq 0$.
\bigskip

If the competing models are linear, vectors $\a_0(\bar\D)$ and $\a_1(\bar\D)$ are zero. Therefore, \eqref{LP} has a feasible solution $(r,\0_m^T,\0_m^T)^T$ for any $r \geq 0$ such that both $\tilde\Theta^{(r)}_0$ and $\tilde\Theta^{(r)}_1$ cover $\0_m$. That is, for the case of linear models, there is a finite upper confidence bound $r^*$ beyond which the $\delta_r$-values of all designs vanish. However, the same holds for specific non-linear models, including the ones from Section \ref{Sec:Ex2}:      
\begin{proposition}\label{condlin}
 Assume that both competing regression models are linear provided that we consider a proper subset of their parameters as known constants. Then \eqref{LP} has a finite feasible solution, i.e., there exists a finite upper confidence bound $r^*$ such that $o(r)=0$ for all $r \in [r^*,\infty]$.
\end{proposition}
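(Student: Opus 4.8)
The plan is to reduce everything to the feasibility of the linear program \eqref{LP}: once we exhibit a triple $(r,\theta_0^T,\theta_1^T)^T$ with $r<\infty$ satisfying all its constraints, Proposition \ref{ECB} immediately supplies a finite upper confidence bound $r^*$ with $o(r)=0$ for all $r\in[r^*,\infty]$, which is exactly the assertion. So the entire task is to produce explicit feasible $\theta_0,\theta_1$ lying in $\tilde\Theta^{(r)}_0,\tilde\Theta^{(r)}_1$ for some finite $r$. In the fully linear case this is trivial because $\a_0(\bar\D)=\a_1(\bar\D)=\0_n$ and one may take $\theta_0=\theta_1=\0_m$; the difficulty here is that for genuinely non-linear $\eta_k$ the offset vectors $\a_k(\bar\D)$ are generally non-zero, so $\theta_0=\theta_1=\0_m$ need not satisfy the equality constraint. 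The conditional-linearity hypothesis is precisely what lets us absorb these offsets.

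Exploiting the hypothesis, for each $k$ I would split the coordinates of $\theta_k\in\R^m$ into the (nonempty, since the fixed subset is proper) block of parameters in which $\eta_k$ is linear, and the complementary block that the hypothesis treats as constants; holding the latter block at its nominal value from $\tilde\theta_k$, we have $\eta_k(\cdot,x)=\beta_k^{T}g_k(x)$ for a vector of regressors $g_k$ depending on $x$ and on the fixed nominal values, where $\beta_k$ denotes the free linear coordinates. The key step is to check that the Taylor linearization underlying $\F_k(\bar\D)$ and $\a_k(\bar\D)$ is exact along this linear block: since $\nabla_{\beta_k}\eta_k(\tilde\theta_k,x_i)=g_k(x_i)$ and $\eta_k(\tilde\theta_k,x_i)=\tilde\beta_k^{T}g_k(x_i)$, evaluating $\F_k(\bar\D)\theta_k+\a_k(\bar\D)=(\eta_k(\tilde\theta_k,x_i))_{i=1}^n+\F_k(\bar\D)(\theta_k-\tilde\theta_k)$ at any $\theta_k$ whose fixed block equals the nominal one reproduces the true mean $(\beta_k^{T}g_k(x_i))_{i=1}^n$ exactly. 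In particular, setting $\beta_k=\0$ (and keeping the fixed block nominal) yields the zero vector, so $\0_n$ lies on \emph{both} linearized mean-value surfaces.

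This is enough: taking $\theta_0$ and $\theta_1$ to be the two explicit points with zero linear block and nominal fixed block, both sides of the equality constraint in \eqref{LP} equal $\0_n$, so the constraint holds. Finally, $\theta_0$ and $\theta_1$ are fixed points of $\R^m$, and because each $\tilde\Theta^{(1)}_k$ is a non-degenerate compact cuboid the dilations satisfy $\tilde\Theta^{(r)}_k\uparrow_r\R^m$ (the same covering fact used in the proof of Proposition \ref{funr}(b)); hence there is a finite $r$ with $\theta_0\in\tilde\Theta^{(r)}_0$ and $\theta_1\in\tilde\Theta^{(r)}_1$, making the triple feasible. Invoking Proposition \ref{ECB} then completes the proof. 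I expect the main obstacle to be the bookkeeping in the exactness step, namely verifying cleanly that moving $\theta_k$ only within the linear block leaves no linearization remainder, so that $\0_n$ genuinely belongs to each surface; by contrast the covering argument and the appeal to Proposition \ref{ECB} are routine.
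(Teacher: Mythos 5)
Your proposal is correct and follows essentially the same route as the paper's own proof: the same split of each parameter vector into a conditionally-linear block and a fixed block held at its nominal value, the same choice of feasible point (zero linear block, nominal fixed block) making $\F_k(\bar\D)\theta_k+\a_k(\bar\D)=\0$ for both $k$, and the same appeal to Proposition \ref{ECB}. Your explicit verification that the linearization is exact along the linear block, and the covering argument $\tilde\Theta^{(r)}_k\uparrow_r\R^m$, merely spell out what the paper calls ``straightforward to verify.''
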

\begin{proof}
Without loss of generality, assume that fixing the first $k_0<m$ components of $\theta_0$ converts Model 0 to a linear model. More precisely, let $\theta_{01},\ldots,\theta_{0m}$ denote the components of $\theta_0$ and assume that
\begin{equation*}
\eta_0(\theta_0,x)=\sum_{j=k_0+1}^m \gamma^{(0)}_j(\theta_{01},\ldots,\theta_{0k_0},x)\theta_{0j}
\end{equation*}
for some functions $\gamma^{(0)}_j$, $j=k_0+1,\ldots,m$. Choose $\hat{\theta}_0$ such that $\hat\theta_{0j}=\tilde\theta_{0j}$ for $j=1,\ldots,k_0$, and $\hat\theta_{0j}=0$ for $j=k_0+1,\ldots,m$. Make an analogous assumption for Model 1 and also define $\hat{\theta}_1$ analogously. It is then straightforward to verify that for the design $\bar{\D}$ from Proposition \ref{ECB} we have $\F_k(\bar\D)\hat\theta_k+\a_k(\bar\D)=\0_d$, where $d=\#\X$, for both $k=0,1$. Therefore, any $(r,\hat\theta_0^T,\hat\theta_1^T)^T$ such that $\hat\theta_0 \in \tilde\Theta^{(r)}_0$ and $\hat\theta_1 \in \tilde\Theta^{(r)}_1$ is a solution of \eqref{LP}.
\end{proof}

In the following we numerically demonstrate that the $\delta$ design criterion leads to designs which yield a high probability of correct discrimination. 

\section{An application in enzyme kinetics} \label{Sec:Ex2} 

This real applied example is taken from \cite{bogacka+al_11} and was already used in \cite{atkinson_12} to illustrate model discrimination designs. There two types of enzyme kinetic reactions are considered, where the reactions velocity $y$ is alternatively modeled as
\begin{equation} \label{competitive}
y=\frac{\theta_{01}x_1}{\theta_{02}\left(1+\frac{x_2}{\theta_{03}}\right)+x_1}+\epsilon,
\end{equation}
and
\begin{equation} \label{noncompetitive}
y=\frac{\theta_{11}x_1}{(\theta_{12}+x_1)\left(1+\frac{x_2}{\theta_{13}}\right)}+\epsilon,
\end{equation}
which represent competitive and noncompetitive inhibition, respectively. 
Here $x_1$ denotes the concentration of the substrate and $x_2$ the concentration of an inhibitor. The data used in \cite{bogacka+al_11} is on Dextrometorphan-Sertraline and yields the estimates displayed in Table \ref{table1}. Assumed parameter spaces were not explicitely given there, but can be inferred from their figures as $\theta_{0,1},\theta_{1,1} \in (0,\infty)$, $\theta_{0,2},\theta_{1,2} \in (0,60]$, and $\theta_{0,3},\theta_{1,3} \in (0,30]$, respectively. Designs for parameter estimation in these models were recently given in \cite{schorning+al_17}.

\begin{table}[h]
\centering
\begin{tabular}{|l|l|l|c|l|l|l|}
\hline
       & estimate $\hat\theta$ & st.err. $\hat \sigma_\theta$ &  &      & estimate $\hat\theta$ & st.err. $\hat\sigma_\theta$                   \\
$\theta_{01}$ &    7.298      &    0.114 &   & $\theta_{11}$ &    8.696      &     0.222                \\
$\theta_{02}$ &    4.386      &    0.233 &   & $\theta_{12}$ &    8.066      &     0.488                 \\
$\theta_{03}$ &    2.582      &    0.145 &   & $\theta_{13}$ &   12.057      &     0.671\\
\hline          
\end{tabular}
\caption{Parameter estimates and corresponding standard errors for models (\ref{competitive}) and (\ref{noncompetitive}), respectively.}
\label{table1}
\end{table}

In \cite{atkinson_12} the two models are combined into an encompassing model
\begin{equation} \label{combined}
y=\frac{\theta_{21}x_1}{\theta_{22}\left(1+\frac{x_2}{\theta_{23}}\right)+x_1\left(1+\frac{(1-\lambda)x_2}{\theta_{23}}\right)}+\epsilon,
\end{equation}
where $\lambda=1$ corresponds to (\ref{competitive}) and $\lambda=0$ to (\ref{noncompetitive}), respectively. Following the ideas of \cite{atkinson_72} as used e.g. in \cite{atkinson_08} or \cite{perrone+al_17} one can then proceed to find so-called $D_s$-optimal {(i.e. D-optimal for only a subset of parameters)} designs for $\lambda$ and employ them for model discrimination. Note that also this method is not fully symmetric as it requires a nominal value for $\lambda$ for linearization of (\ref{combined}), which induces some kind of weighting.

The nominal values used in \cite{atkinson_12} obviously motivated by the estimates of (\ref{competitive}) were $\tilde \theta_{01} = \tilde\theta_{11} = \tilde \theta_{21} =10$, $\tilde\theta_{02} = \tilde\theta_{12} = \tilde \theta_{22}=4.36$, $\tilde\theta_{03}=2.58$, $\tilde\theta_{13}=5.16$, and $\tilde \theta_{23} = 3.096$. However, note that particularly for model (\ref{noncompetitive}) the estimates in Table \ref{table1} give considerably different values and also nonlinear least squares directly on (\ref{combined}) yields the deviating estimates given in Table \ref{table2}. The design region used was rectangular ${\cal X} = {\cal X}_1 \times {\cal X}_2 = [0,30] \times [0,40]$.

\begin{table}[h]
\centering
\begin{tabular}{|l|l|l|}
\hline
       &  estimate $\hat\theta$ & st.err. $\hat \sigma_\theta$                  \\
$\theta_{21}$ &    7.425      &    0.130                \\
$\theta_{22}$ &    4.681      &    0.272                 \\
$\theta_{23}$ &    3.058      &    0.281                 \\
$\lambda$ 	& 0.964      &    0.019 \\
\hline         
\end{tabular}
\caption{Parameter estimates and corresponding standard errors for the encompassing model (\ref{combined}).}
\label{table2}
\end{table}

In table 2 of \cite{atkinson_12} four approximate optimal designs (we will denote them A1-A4) were presented: the $T-$optimal designs assuming $\lambda=0$ (A1) and $\lambda=1$ (A4), a compound $T$-optimal design (A3) and a $D_s$-optimum (A2) for the encompassing model (for the latter note that Atkinson assumed $\lambda=0.8$ whereas the estimate suggest a much higher value). We will compare our $\delta$-optimal designs against properly rounded (by the method of \cite{pukelsheim+r_92}) exact versions of these designs.  

\subsection{Confirmatory experiment $n=6$, normal errors}

Let us first assume we want to complement the knowledge from our initial experiment by another experiment for which, however, we were given only limited resources, e.g. for the sample sizes of mere $n=6$ observations. Note that the aim is not to augment the previous 120 observations but to make a confirmatory decision just out of the new observations. That is we are using the data from the initial experiment just to provide us with nominal values for parameter estimates and noise variances for the simulation respectively. {This is a realistic scenario if for instance for legal reasons the original data had to be deleted and only summary information was kept available}.

As we are assuming equal variances for the two models we are using the estimate for the error standard deviation $\hat \sigma = 0.1526$ from the encompassing model as a base value for the simulation error standard deviation. However, using $\hat \sigma$ was not very revealing for the hit rates were consistently high for all designs. Thus to accentuate the differences the actual standard deviation used was $2 \times \hat \sigma$ instead (unfortunately an even higher inflation is not feasible as it would result in frequent negative observations leading to faulty ML-estimates). We then simulated the data generating process under each model for $N=10000$ times and calculated the total percentages of correct discrimination (hit rates) when using the likelihood ratio as decision rule. 

We are comparing the designs A1-A4 to three specific delta designs $\delta1, \delta2$, and $\delta3$ which represent a range of different nominal intervals. Specifically we chose  $\tilde\Theta_k=[\tilde\theta_{k1} \pm r\tilde\sigma_{k1}] \times [\tilde\theta_{k2} \pm r\tilde\sigma_{k2}] \times [\tilde\theta_{k3} \pm r\tilde\sigma_{k3}]_{k=0,1}$, where we chose $\tilde\theta_{kj} = \hat\theta_{kj}$ and $\tilde\sigma_{kj}=\hat\sigma_{kj}$ for $k=0,1$ and $j=1,2,3$. The tuning parameter $r$ was set to three levels: $r=1$ (which is close to the lower bound of still providing a regular design), $r=5$ and $r=15$ (which is sufficiently close to the theoretical upper bound to yield a stable design), respectively.  To make the latter more precise: the models in considerations are such that if we fix the last two out of the three parameters, then they become one-parametric linear models. Therefore, using Proposition \ref{condlin} we know that there exists a finite upper confidence bound $r^*$. Solving  \eqref{LP} provides the numerical value $r^* \approx 64.02$. Note that the same bound is valid for all design sizes $n$. While A1-A4 and $\delta$1 all contain 4 support points, while $\delta$2 has 6 and $\delta$3 5, respectively. A graphical depiction of the designs is given in Figure \ref{designs}. 

\begin{figure}[ht] 
\vskip-0.5cm
\centering
\includegraphics[width=0.24\textwidth]{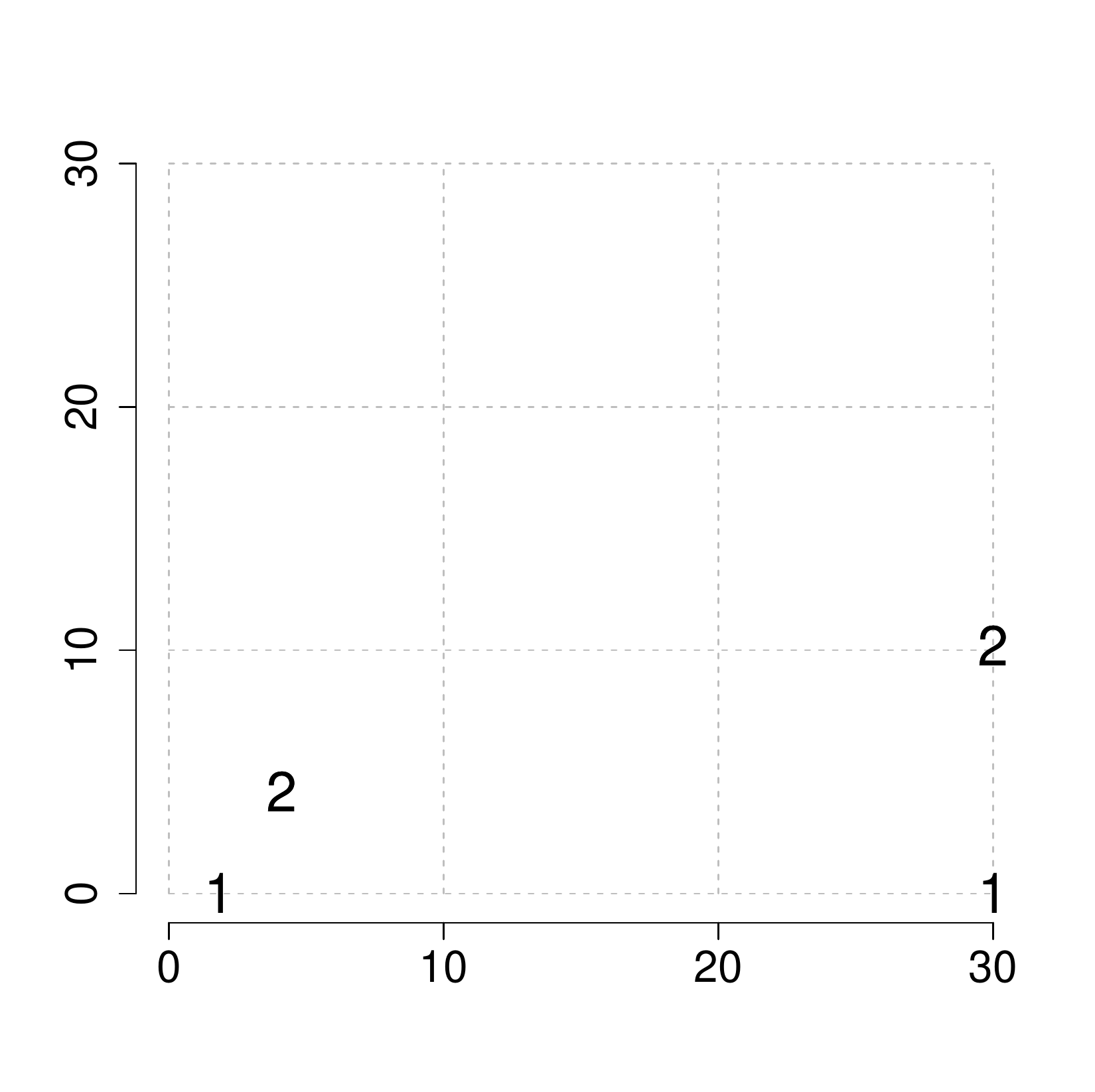}
\includegraphics[width=0.24\textwidth]{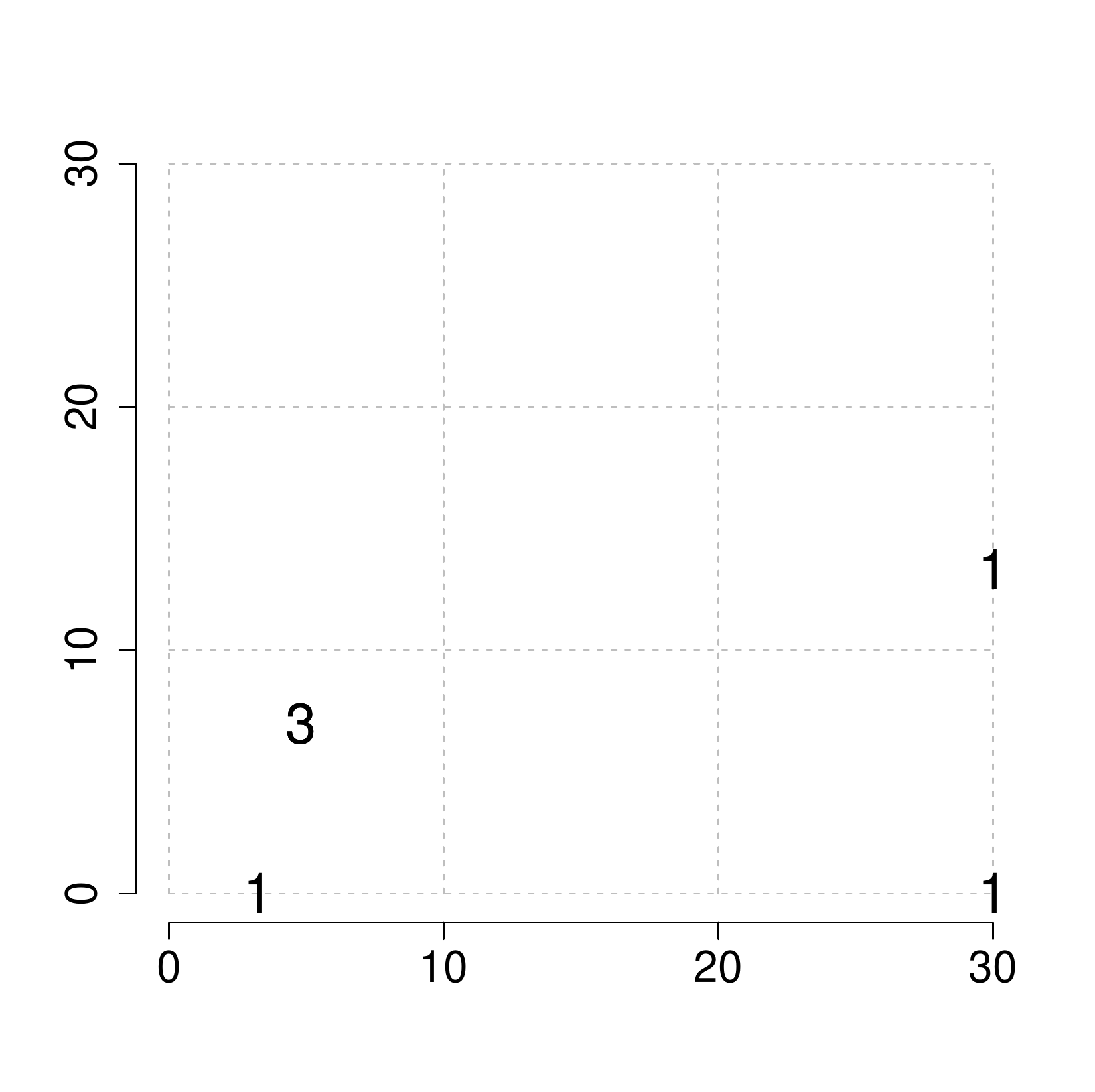}
\includegraphics[width=0.24\textwidth]{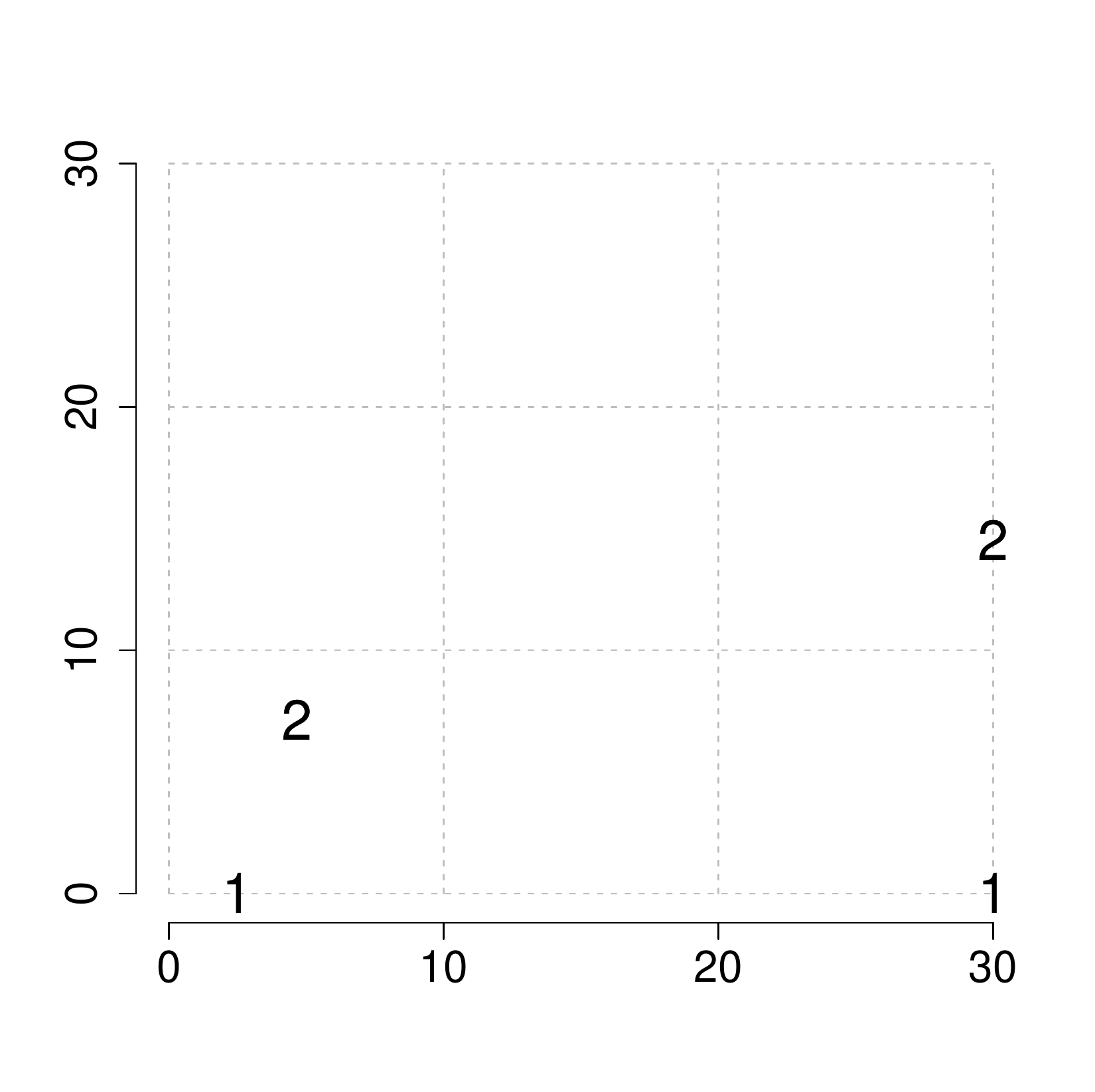}
\includegraphics[width=0.24\textwidth]{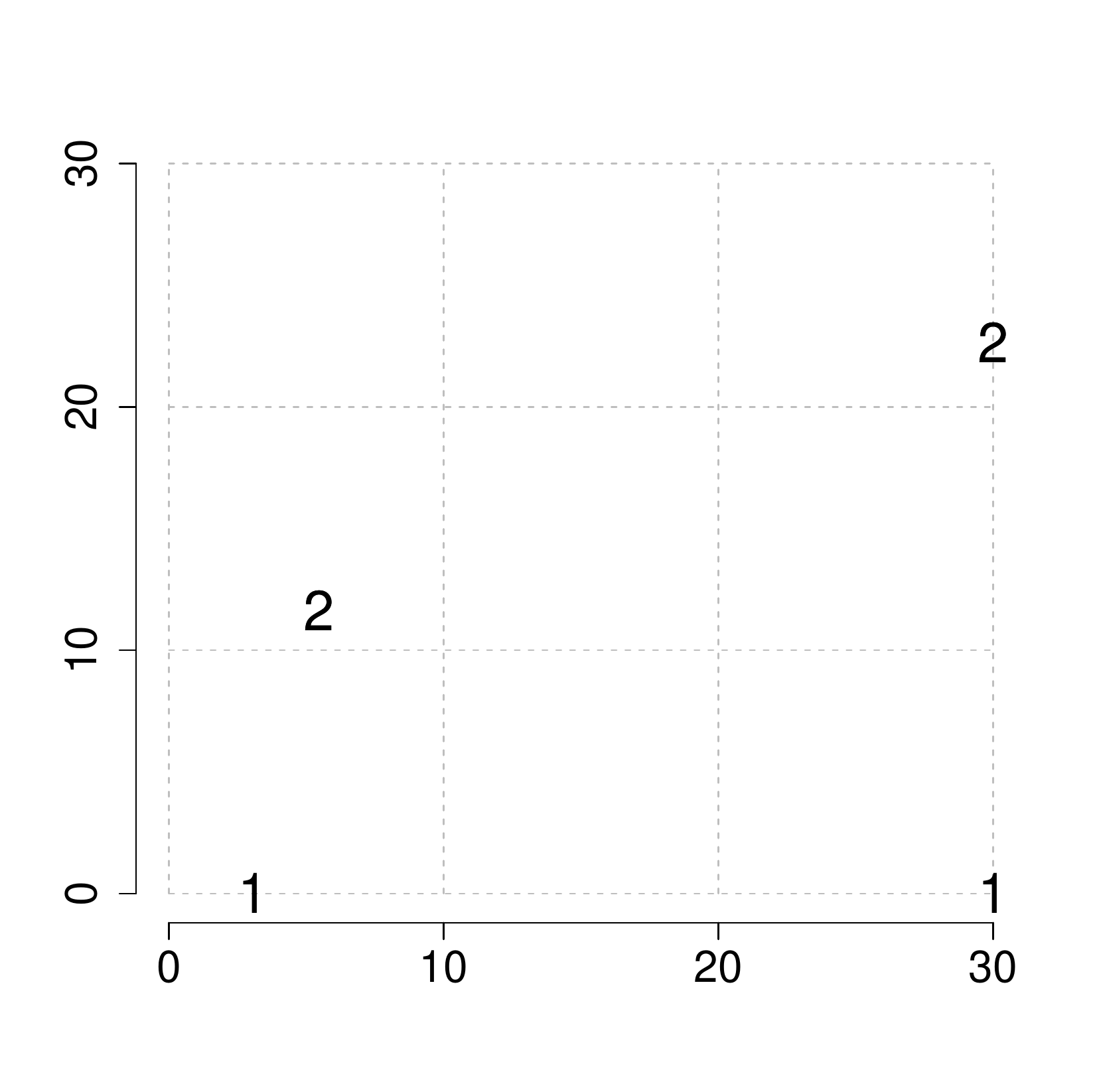}
\includegraphics[width=0.24\textwidth]{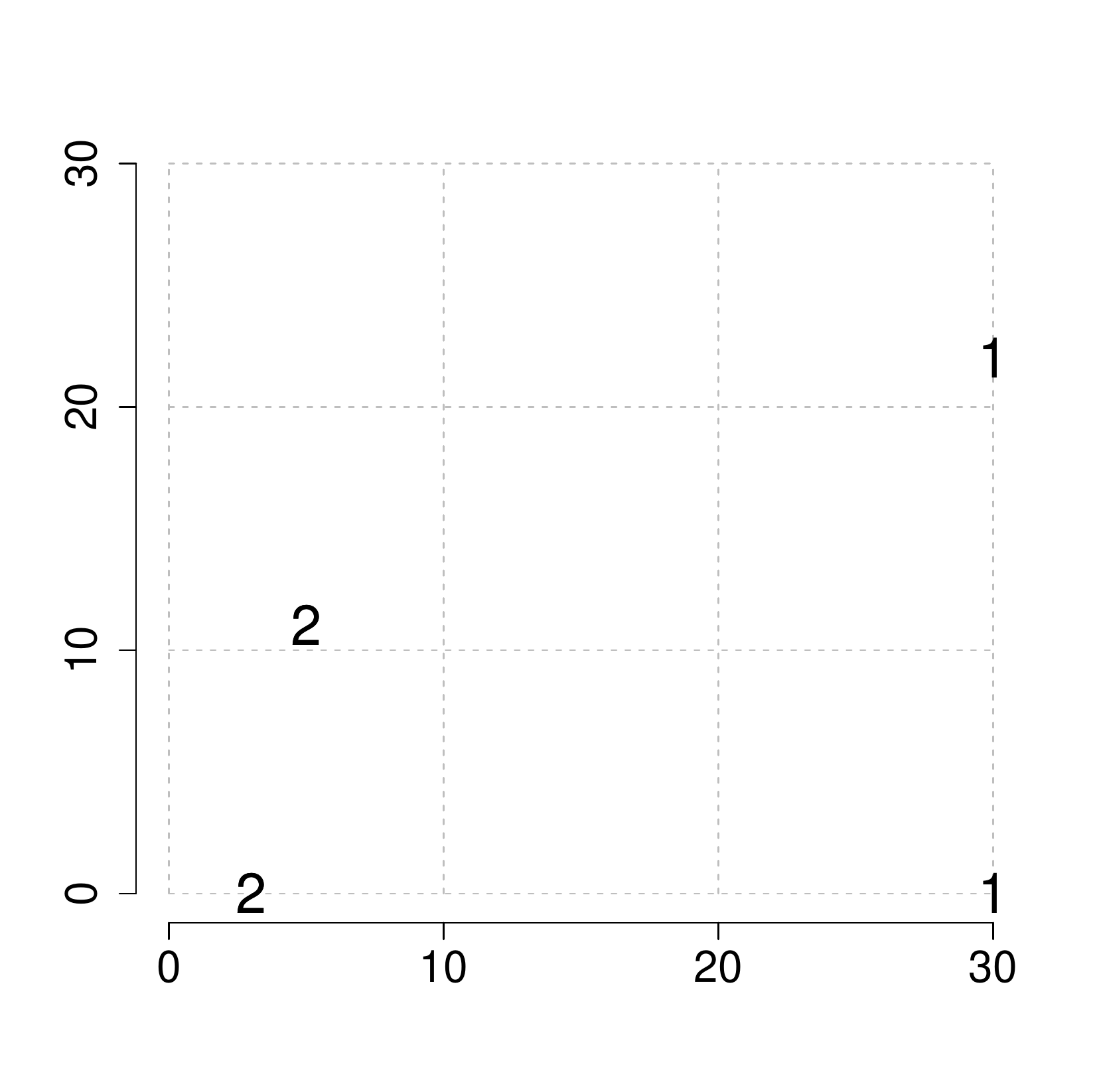}
\includegraphics[width=0.24\textwidth]{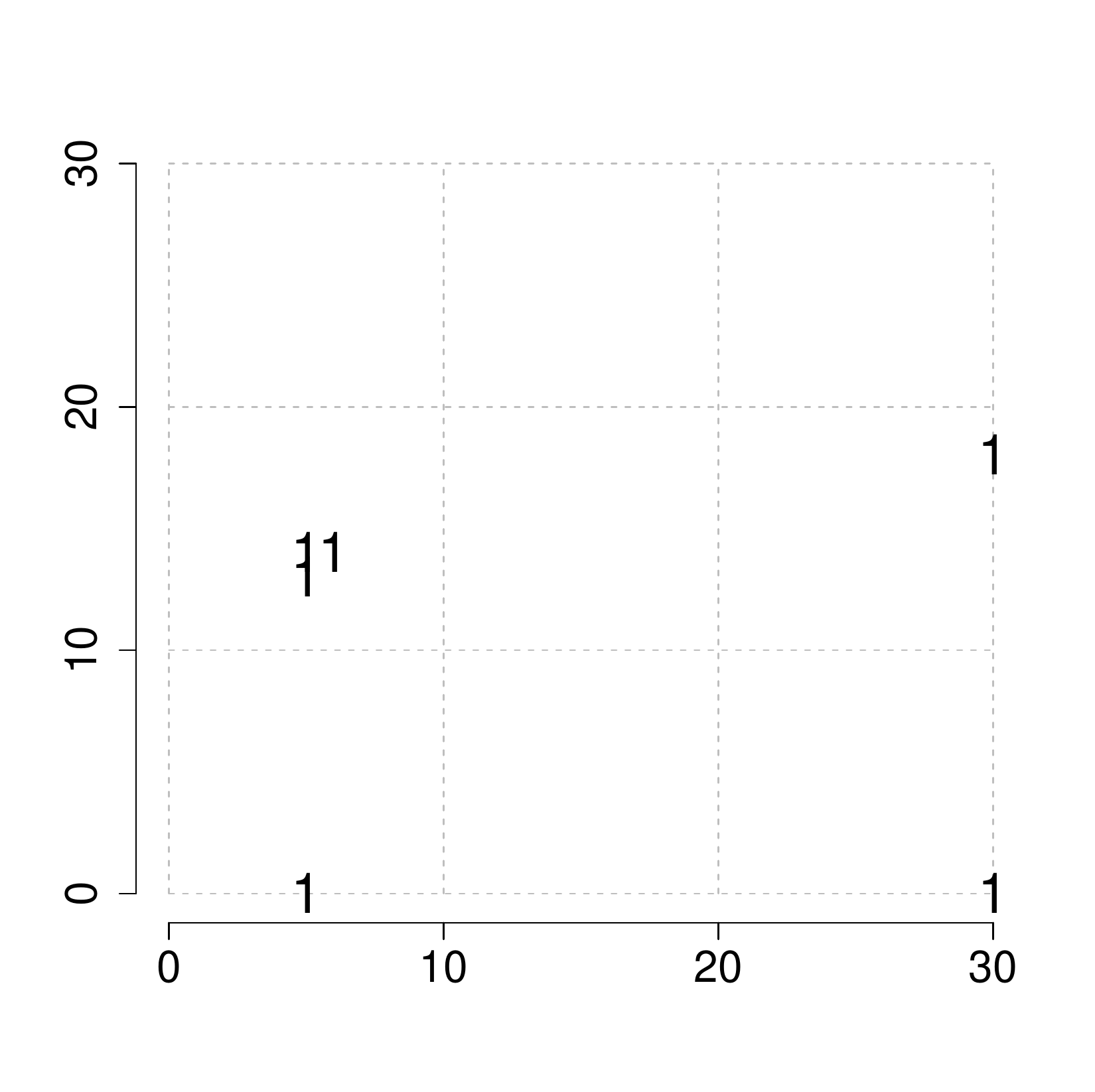}
\includegraphics[width=0.24\textwidth]{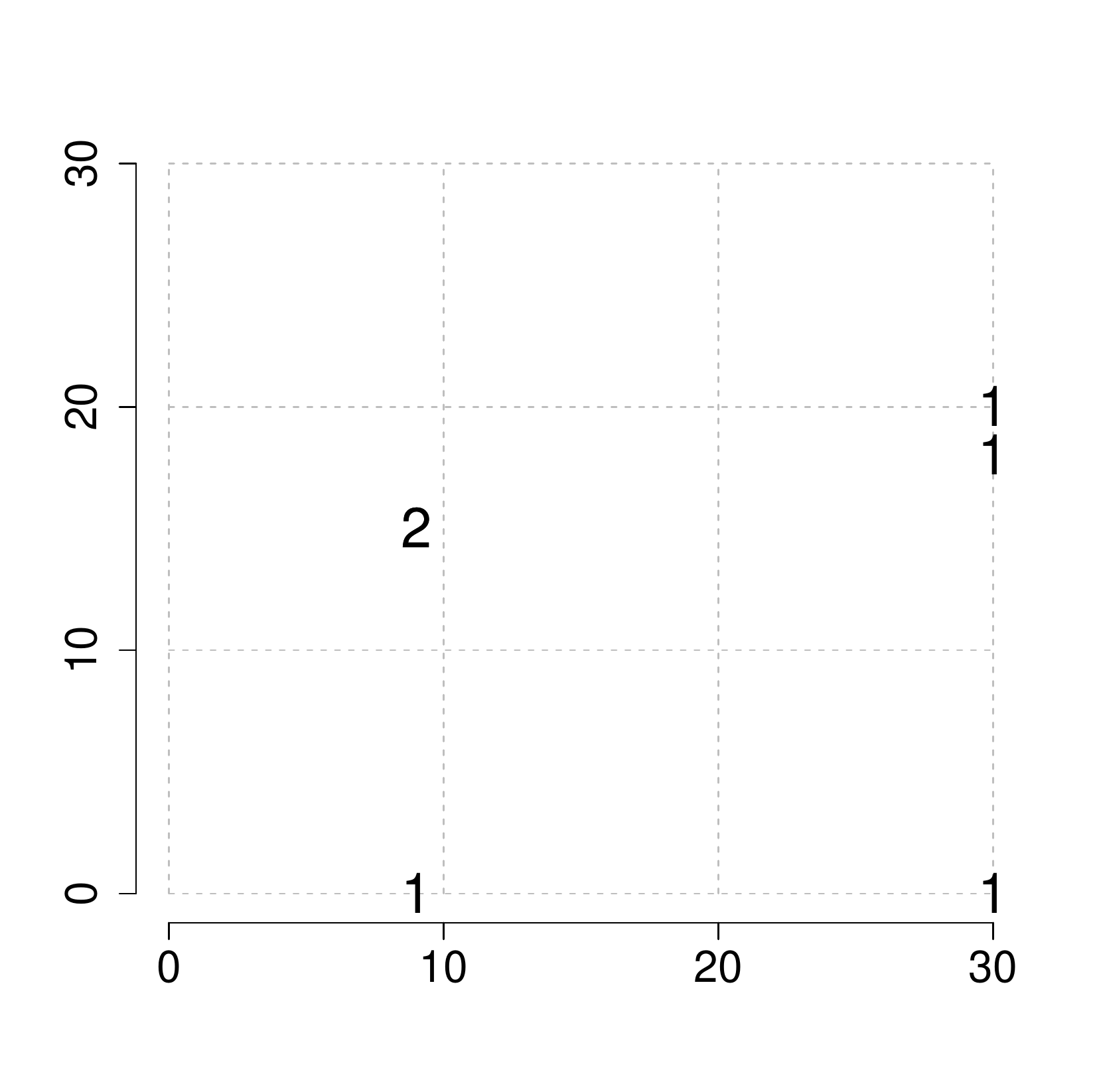}
\vskip-0.5cm
\caption{Compared designs: first row A1-A4, second row $\delta$1-$\delta$3.}\label{designs}
\end{figure}

{\noindent\textbf{Robustness study:}
As we would like to avoid to compare designs only if the data is generated from the nominal values (although this favours all designs equally) we} perturbed the data generating process by drawing parameters from uniform distributions drawn at $\tilde \theta \pm c \times \tilde\sigma_\theta$, where $c$ then acts as a pertubation parameter. Under these settings all these designs fare pretty well as can be seen from  Table \ref{table3}.  However, $A4$ and $\delta2$ seem to outperform the other competing designs by usually narrow margins except perhaps for $A1$, which is consistently doing worst. Note that in a real situation the true competitors of $\delta$-optimal designs are just $A2$ and $A3$ as it is unknown beforehand which model is true.

\begin{table}[h]
\centering
\begin{tabular}{|c|c|c|c|c|c|c|}
\hline
c            & \multicolumn{2}{c|}{0} & \multicolumn{2}{c|}{1} & \multicolumn{2}{c|}{5} \\ \hline
true model   & $\eta_0$    & $\eta_1$   & $\eta_0$    & $\eta_1$   & $\eta_0$    & $\eta_1$   \\ \hline
A1            & 91.11      & 94.45     & 91.35      & 93.95     & 90.44      & 93.24     \\ \hline
A2            & 97.11      & 96.75     & 97.47      & 96.64     & 96.74      & 96.27     \\ \hline
A3            & 96.60      & 96.51     & 96.47      & 96.40     & 95.69      & 96.06     \\ \hline
A4            & {\bf 97.94}      & 96.57     & 97.73      & 96.29     & 97.62      & 96.07     \\ \hline
$\delta1$ & 97.59      & 95.11     & 97.43      & 94.90     & {\bf 97.71}      & 94.56     \\ \hline
$\delta2$ & 97.93      & {\bf 97.03}     & {\bf 97.77}      & {\bf 96.67}     & 97.20      & {\bf 96.54}     \\ \hline
$\delta3$ & 96.50      & 95.29     & 96.42      & 95.36     & 96.19      & 95.64     \\ \hline
\end{tabular}
\caption{Total hit rates for $N=10000$ under each model.}
\label{table3}
\end{table}
 
\subsection{A second large scale experiment $n=60$, lognormal errors}

We would like to investigate the respective pereformance in a larger scale setting, where potential rounding effects are neglibile. 
For that purpose, using additive normal errors in the data generating process turns out unfeasible as the discriminatory power of all the designs for $n=60$ is nearly perfect without inflating error variance. Inflating the variance by a large enough factor, however, would generate a large number of negative observations, which renders likelihood estimation invalid. So, the data generating process was adapted to use multiplicative lognormal errors. The observations were then rescaled to match the means from the original process. This way we are ad liberty to inflate the error variance by any factor without producing faulty observations. Note that now the data generating process does not fully match the assumptions under which the designs were generated, but this can just be considered an extended robustness study as it holds for all compared designs equally. {We could of course also have calculated the designs under the same data-generating process, but as the fit of the model to the original data is not greatly improved and models (\ref{competitive}) and (\ref{noncompetitive}) seem firmly established in the parmacological  literature, we refrained from doing this.}

Perturbation of the parameters here did not exhibit a discernible effect, while the error inflation still does. For brevity we here report only again the results for using  $5 \times \hat \sigma$ (and $c=0$). The respective designs $\delta$1-3 were qualitatively similar to those given in Figure \ref{designs} albeit with more diverse weights. In this simulation we generated 100 instances of $n=60$ observations from these designs a thousand times.

The corresponding boxplots of the correct classification rates are given in Figure \ref{boxplot}.   In this setting A4 seems a bit superior even under $\eta_1$ (remember it being the $T$-optimum design assuming $\eta_0$ true), while $\delta$1 and $\delta$2 come close (and beat the true competitors A2 an A3) with A1 again being clearly the worst. 

\begin{figure}[ht] 
\vskip-0.5cm
\centering
\includegraphics[width=0.7\textwidth]{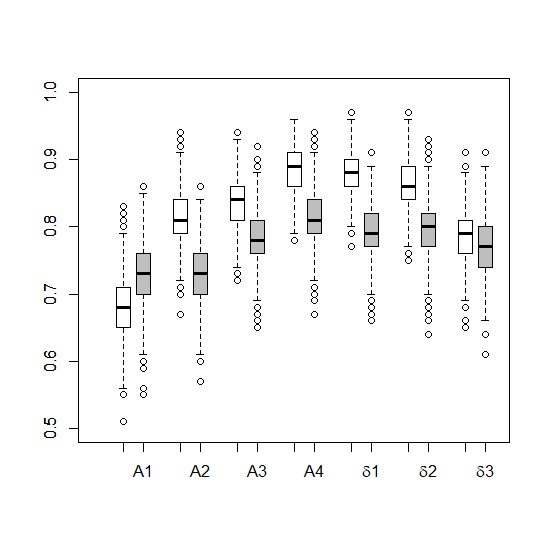}
\vskip-1cm
\caption{Boxplot for the total correct classification rates for all designs using nominal values and error standard deviations of $5 \times \hat \sigma$; white under $\eta_0$, grey under $\eta_1$.}
\label{boxplot}
\end{figure}
 
\section{Conclusions and possibilities of further research}\label{Sec:outlook}

We have presented a novel design criterion for symmetric model discrimination. Its main advantage is that design computations, unlike to $T$-optimality, can be undertaken with efficient routines of quadratic optimization that enhance the speed of computations by an order of magnitude. Also it was shown in an example that resulting designs are competitive in their actual discriminatory abilities. 

We have also introduced the notion of nominal confidence sets, which may have independent merit. 
Note again the distinction between parametric spaces and nominal confidence sets (and thus the principal distinction to `rigid' minimax approaches). 
Parametric spaces usually encompass all theoretically possible values of the parameters, while nominal confidence sets can contain the unknown parameters with very high probability, and still be significantly smaller than the original parameter spaces. 
 In this paper, we do not specify the process of constructing the nominal confidence regions, but if we perform a two stage experiment, with a second, discriminatory phase, the specification of the confidence sets is an important problem. 
\bigskip

As the approach suggested offers a fundamentally new way of constructing discriminatory designs, naturally many questions are yet unexplored and may warrant a closer look, see the following non-exhaustive list.

{\bigskip
\noindent\textbf{Sequential procedure.} 
The proposed method lends itself naturally to a two-stage procedure, where parameter estimates and confidence intervals are employed as nominal values in the second stage. Even sequential generation of design points can be straightforwardly implemented.
}   

\bigskip
\noindent\textbf{Approximate designs.} 
Proposition \ref{approx} is a possible gateway for the development of the standard approximate design theory for $\delta$-optimality, because the criterion $\delta^2_{app}$ is concave on the set of all approximate designs. Therefore, it is possible to work out a minimax-type equivalence theorem for $\delta$-optimal approximate designs, and use specific convex optimization methods to find a $\delta$-optimal approximate designs numerically. For instance, it would be possible to employ methods analogous to \cite{Burclova+P_16} or \cite{Yue+al_18}.
\bigskip

\bigskip


\noindent\textbf{Utilization of the $\delta$-optimal designs for related criteria.} For a design $\D=(x_1,\ldots,x_n)$, a natural criterion closely related to $\delta_r$-optimality can be defined as
\begin{eqnarray*}
 \tilde{\delta}_r(\D)=\inf_{\theta_0 \in \tilde\Theta_0^{(r)}, \theta_1 \in \tilde\Theta_1^{(r)}} \tilde{\delta}(\D |\theta_0,\theta_1), \text{ where }\\
 \tilde{\delta}(\D |\theta_0,\theta_1)=\left\|(\eta_0(\theta_0,x_i))_{i=1}^n-(\eta_1(\theta_1,x_i))_{i=1}^n\right\|.
\end{eqnarray*}
The criterion $\tilde{\delta}_r$ requires a multivariate non-convex optimization for the evaluation in each design $\D$, which entails possible numerical difficulties and a long time to compute an optimal design. However, the $\delta_r$-optimal design, which can be computed rapidly and reliably, can serve as efficient initial design for the optimization of $\tilde{\delta}_r$. Note that if $\tilde{\Theta}_0$ is a singleton containing only the nominal parameter value for Model 0, the $\delta_r$-optimal designs could potentially be used as efficient initial designs for computing the exact version of the criterion of $T$-optimality.
\bigskip

\noindent\textbf{Selection of the best design from a finite set of possible candidates.} 
As most proposals for the construction of optimal experimental designs, the method depends on the choice of some tuning parameters or even on entire prior distributions (in the Bayesian approach), which always results in a set of possible designs. It would be interesting to develop a comprehensive Monte-Carlo methodology for the choice of the best design out of this pre-selected small set of candidate designs. A useful generalization of the rule would take into account possibly unequal losses for the wrong classification.
\bigskip

{\noindent\textbf{Noncuboid sets.}
The methodology could certainly be extended to other types of confidence sets, particularly when we are interested in functional relations among the parameters . However then the particularly efficient box constrained quadratic programming algorithm could not be utilized.
\bigskip
}

\noindent\textbf{Higher-order approximations.}
As a referee remarked  it is possible to employ tighter approximations of the sets of mean values of responses than the one which we suggest. For instance, it would be possible to use the local curvature of the mean-value function. However, this may also lead to the loss of numerical efficiency of the method.
\bigskip

{
\noindent\textbf{More than two rival models.}
Another referee remark leads us to point out the natural extension to investigate a weighted sum or the minimum $\delta$ over all paired comparisons. The implications of this suggestions, however, requires deeper investigations.
}
\bigskip 

\noindent\textbf{Combination with other criteria.} The proposed method can produce {poor or even singular designs for estimating model parameters}. Because of this problem, which is btw. already mentioned in \cite{atkinson+f_75}, \cite{atkinson_08} used a compound criterion called $DT$-optimality. The same approach is possible for $\delta$-optimality. However, our numerical experience suggests that for a large enough size of the nominal confidence set, the delta-optimal designs tend to be supported on a set which is large enough for estimability of the parameters, without any combination with an auxiliary criterion. A detailed analysis goes beyond the scope of this paper.
\bigskip

\section*{Acknowledgements}

We are very grateful to Stefanie Biedermann from the University of Southampton for intensive discussions on earlier versions of the paper. We also thank Stephen Duffull from the University of Otago for sharing his code and Barbara Bogacka for sharing the data. Thanks to various participants of the design workshop in Banff, August 2017 and to Valerii Fedorov for many helpful comments. {We acknowledge the valuable inputs from four referees which lead to a considerable improvement of the paper.}

\end{document}